\pdfoutput=1
\documentclass[acmsmall]{acmart}
\setcopyright{cc}
\setcctype{by}
\acmJournal{PACMMOD}
\acmYear{2026}
\acmVolume{4}
\acmNumber{3 (SIGMOD)}
\acmArticle{250}
\acmMonth{6}
\acmDOI{10.1145/3802127}

\ccsdesc[500]{Information systems~Data warehouses}
\ccsdesc[500]{Information systems~Cloud based storage}
\ccsdesc[500]{Information systems~Database query processing}
\ccsdesc[500]{Information systems~Autonomous database administration}

\keywords{cloud data warehouses; analytical query processing; micro-partitions; zonemap pruning; incremental table reclustering; workload-aware optimization; cost-based optimization}

\received{October 2025}
\received[revised]{January 2026}
\received[accepted]{February 2026}

\usepackage[noabbrev,capitalize,nameinlink]{cleveref}
\usepackage{tikz}
\usepackage{enumitem}
\usepackage{amsfonts, amsmath, amsthm, mathtools}
\usepackage{thmtools,thm-restate}
\usepackage{algorithm}
\usepackage[noend]{algpseudocode}
\usepackage{algpseudocode}
\usepackage{cleveref}
\usepackage{listings}
\usepackage{xurl}
\usepackage{xspace}
\usepackage{makecell}
\usepackage{diagbox}
\usepackage{tabularx}
\usepackage{marginnote}
\usepackage{bm}
\usepackage[svgnames]{xcolor}
\usepackage[most]{tcolorbox}

\newtcolorbox{reviewercomment}{
    colback=gray!10,
    colframe=gray!40,
    boxrule=0pt,  
    arc=2pt,      
    left=4pt, right=4pt, top=4pt, bottom=4pt
}

\theoremstyle{plain}
\newtheorem{theorem}{Theorem}

\newtheorem{lemma}[theorem]{Lemma}

\DeclarePairedDelimiter{\ceil}{\lceil}{\rceil}
\DeclarePairedDelimiter{\floor}{\lfloor}{\rfloor}

\newcommand{\ul}[1]{\ }

\newcommand{\review}[2]{#2\xspace}

\newcommand{\algnamefull}{Workload-Aware Incremental Reclustering\xspace}
\newcommand{\algname}{WAIR\xspace}
\newcommand{\sysnamefull}{Automatic Reclustering Service\xspace}
\newcommand{\sysname}{ARS\xspace}

\newcommand{\zonemap}{zonemap\xspace}

\begin{document}


\renewcommand{\thefigure}{\arabic{figure}}
\renewcommand{\thealgorithm}{\arabic{algorithm}}
\setcounter{figure}{0}
\setcounter{algorithm}{0}
\setcounter{page}{1}

\title{Workload-Aware Incremental Reclustering in Cloud Data Warehouses}

\author{Yipeng Liu}
\affiliation{%
    \institution{Tsinghua University}
    \city{Beijing}
    \country{China}
}
\email{yipeng.liu@tuna.tsinghua.edu.cn}

\author{Renfei Zhou}
\orcid{0000-0002-1825-0097}
\affiliation{%
    \institution{Carnegie Mellon University}
    \city{Pittsburgh}
    \country{USA}
}
\email{renfeiz@andrew.cmu.edu}

\author{Jiaqi Yan}
\orcid{0000-0001-5109-3700}
\affiliation{%
    \institution{Snowflake Inc.}
    \city{San Carlos}
    \country{USA}
}
\email{jiaqi@snowflake.com}

\author{Huanchen Zhang}
\authornote{Huanchen Zhang is also affiliated with the Shanghai Qi Zhi Institute. Corresponding author.}
\affiliation{%
    \institution{Tsinghua University}
    \city{Beijing}
    \country{China}
}
\email{huanchen@tsinghua.edu.cn}

\begin{abstract}
    \phantomsection
    \label{sec:abstract}
    Modern cloud data warehouses store data in micro-partitions and rely on metadata (e.g., zonemaps) for efficient data pruning during query processing.
    Maintaining data clustering in a large-scale table is crucial for effective data pruning.
    Existing automatic clustering approaches lack the flexibility required in dynamic cloud environments with continuous data ingestion and evolving workloads.
    This paper advocates a clean separation between \textit{reclustering policy} and \textit{clustering-key selection}.
    We introduce the concept of boundary micro-partitions
    \marginnote{\ul{R1.M1}}[-1.2em] \review{a}{that sit on the boundary of query ranges.}
    We then present \algname, a workload-aware algorithm to identify and recluster only boundary micro-partitions most critical for pruning efficiency.
    \algname achieves near-optimal (with respect to fully sorted table layouts) query performance but incurs significantly lower reclustering cost with a theoretical upper bound.
    We further implement the algorithm into a prototype reclustering service and \marginnote{\ul{Meta}\\\ul{R1.O1}\\\ul{R4.O1}}[-1.2em]
    \review{m}{evaluate on standard benchmarks (TPC-H, DSB) and a real-world workload.}
    \marginnote{\ul{R4.O3}}
    \review{d}{Results show that \algname improves query performance and reduces the overall cost compared to existing solutions.}
\end{abstract}

\pagestyle{fancy}
\maketitle

\section{INTRODUCTION}

The rapid growth of data in modern enterprises has created significant challenges for managing and analyzing large datasets.
To handle this scale, organizations are progressively migrating their database systems from traditional on-premises infrastructure to cloud-based environments \cite{DBLP:journals/tkde/DongZLZ24}.
Cloud data warehouses, such as Snowflake \cite{DBLP:conf/sigmod/DagevilleCZAABC16,DBLP:conf/nsdi/VuppalapatiMATM20}, Redshift \cite{DBLP:conf/sigmod/ArmenatzoglouBB22}, and BigQuery \cite{DBLP:journals/pvldb/0001GLRSTVADMPS20}, have emerged as a popular solution.
They adopt an architecture that disaggregates compute from storage to provide outstanding elasticity and scalability.
A key feature of these cloud data warehouses is that they partition data into small, fixed-size units (we call these units ``micro-partitions'', following Snowflake's terminology) and store them in cloud object storage (e.g., AWS S3) using proprietary file formats or open-source ones such as Apache Parquet \cite{parquet}.
The system typically maintains the metadata (e.g., min-max \zonemap{s} \cite{DBLP:conf/vldb/Moerkotte98}) for each micro-partition in a separate service to enable efficient scan-set pruning during query optimization and execution.
Studies have shown that effective pruning is critical for improving the performance of analytical queries \cite{DBLP:conf/sigmod/DagevilleCZAABC16,DBLP:conf/sigmod/ZimmererDKWOK25}.

While micro-partitions are automatically created in data ingestion order, this natural ordering often does not align with the optimal physical layout for analytical workloads.
Maintaining optimal clustering is essential to ensure that only relevant micro-partitions are accessed during queries.
For example, consider a table storing sales information where data arrives chronologically (in batches) in $order\_timestamp$ order.
If the micro-partitions are clustered according to the data's natural ingestion order, queries that contain filter conditions on $customer\_id$ must access a large portion of these micro-partitions, each containing only a few records that match the filter conditions.

For large tables with continuous data ingestion, manually maintaining the clustered state places a huge burden on database users to monitor and analyze DML patterns and workload distributions.
Existing approaches, such as Qd-tree~\cite{DBLP:conf/sigmod/YangCWGLMLKA20} and MDDL in Redshift~\cite{DBLP:conf/sigmod/DingABPJPPPPPSS24}, assume a stationary workload and learn an optimal layout for the entire table from historical queries.
They are, however, suboptimal under continuous data ingestion and evolving workload patterns.
Iceberg~\cite{ibe}, Delta Lake~\cite{dlfrlc, dlddlc}, and Databricks~\cite{dblc, dbalc} allow dynamically switching clustering keys to accommodate workload shifts, but the updated physical layout is only applied to newly ingested data.
Snowflake~\cite{ckct} and Dremio~\cite{dmaic} propose the data-driven incremental reclustering, where they develop static metrics to evaluate the clustering quality of each micro-partition and only recluster the ill-clustered ones.
Although these approaches reduce the overall reclustering cost, they are unaware of workloads, resulting in unnecessary reclustering of rarely accessed data.

In this paper, we introduce \algnamefull (\textbf{\algname}), a reclustering algorithm (and service) that can efficiently and automatically maintain a high-quality clustered state for large, micro-partitioned tables in a cloud data warehouse under continuous data ingestion and evolving query patterns.
The key insight is that a small set of boundary micro-partitions (i.e., those that sit on the boundary of query ranges) largely determines pruning effectiveness.
Built upon the theoretical analysis of \algname's amortized cost, \algname adopts an incremental workload-aware approach that prioritizes boundary micro-partitions with higher expected payoff.
We develop a cost model for selecting a proper set of micro-partitions to recluster after each query to maximize the overall cost reduction (i.e., query execution savings - reclustering cost) within a time window.
By decoupling the reclustering policy from the clustering-key selection, \algname provides the flexibility for each micro-partition to choose its own clustering key to maximize the pruning gains.
We then implement an automatic reclustering service based on the \algname algorithm.
The service is completely off the query's critical path and is a drop-in component that leverages query-execution statistics to schedule and execute reclustering for a cloud data warehouse.
We experimentally compare \algname against a variety of baseline approaches, including research proposals and publicly documented commercial methods,
\marginnote{\ul{Meta}\\\ul{R1.O1}\\\ul{R4.O1}}[-1.2em]
\review{m}{using standard benchmarks (TPC-H \cite{tpch}, DSB \cite{DBLP:journals/pvldb/DingCGN21}) and a real-world workload.}
Our results show that \algname significantly improves query performance and lowers the overall operational cost simultaneously compared to the baselines.
\marginnote{\ul{Meta}\\\ul{R1.O4}\\\ul{R4.O1}}[-1.2em]
\review{m}{Sensitivity analysis using our workload generator shows that \algname remains robust under a wide range of workload scenarios.}

We make four primary contributions in this paper.
First, we propose a policy-key decoupled taxonomy and framework for reclustering.
Second, we introduce the concept of boundary micro-partitions that dominate scan-set pruning efficiency, and prove a theoretical upper bound for the cost of reclustering all these boundary micro-partitions.
Third, we propose \algname based on boundary micro-partitions that can automatically determine the reclustering timing and the micro-partition set to maximize overall cost reduction.
Finally, we implement a workload-aware automatic reclustering service based on \algname and demonstrate its advantages in cost, query performance, and robustness against diverse workloads over existing solutions.

\section{PRELIMINARIES}
\label{sec:preliminaries}

\begin{figure}[t!]
    \centering
    \includegraphics[width=\linewidth]{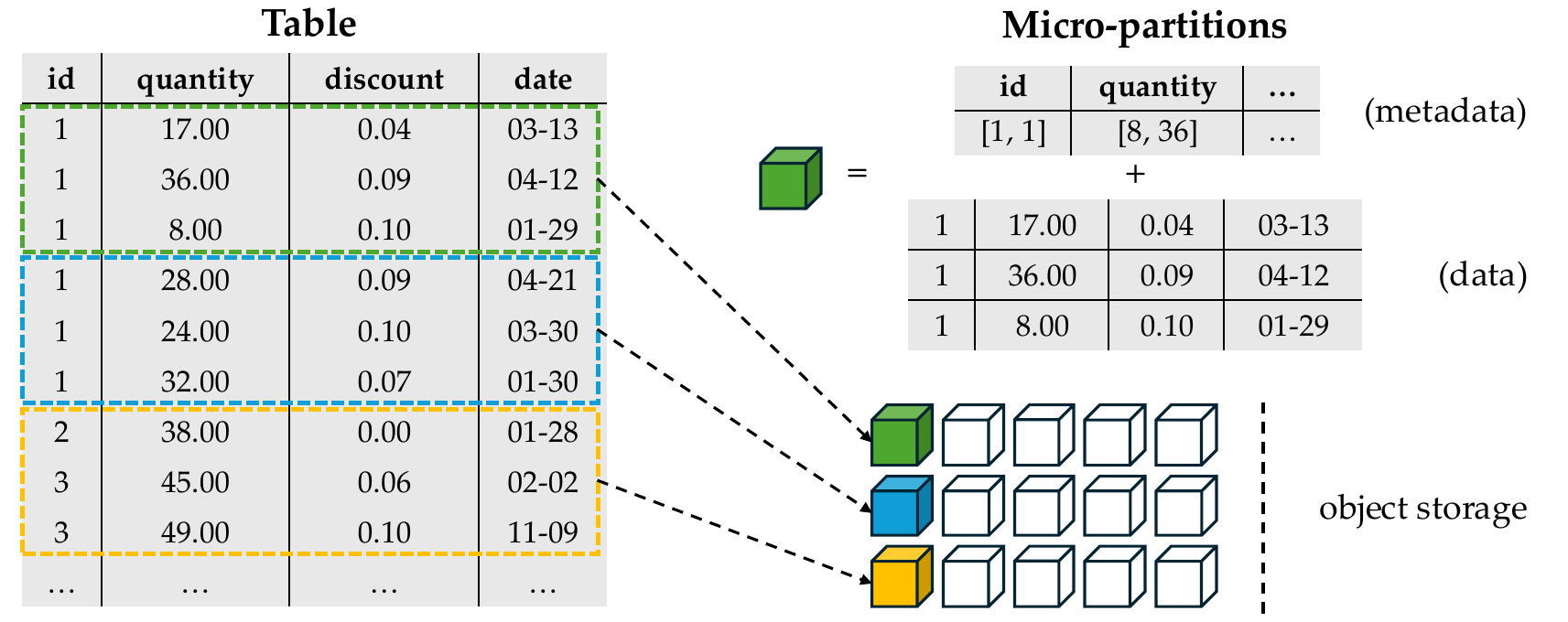}
    \caption{Micro-partitions in Cloud Data Warehouses. \textmd{A table is structured as multiple micro-partitions stored in the object storage layer. Each micro-partition comprises both metadata (e.g., zonemaps) and data blocks. During query execution, the metadata is used to prune irrelevant micro-partitions.}}
    \Description{Micro-partitions in Cloud Data Warehouses}
    \label{fig:micro-partitions}
\end{figure}

A defining feature of cloud data warehouses is the disaggregation of compute and storage for independent scaling \cite{DBLP:conf/cidr/Zaharia0XA21,DBLP:journals/pvldb/BarnhartBCHIJKKKMMNTUY24,DBLP:conf/sigmod/DagevilleCZAABC16}.
For queries on large tables, avoiding full table scans is critical for achieving optimal analytical performance.
A widely adopted approach is partitioning \cite{DBLP:conf/sigmod/LarsonCFHMNPPRRS13,DBLP:conf/sigmod/YangCWGLMLKA20,DBLP:conf/sigmod/DingMCWLLKGK21}.
In cloud data warehouses such as Snowflake, tables are divided into \textbf{micro-partitions}\footnote{Throughout this paper, the term ``partition'' typically refers to micro-partitions, unless otherwise specified.
} \cite{DBLP:conf/sigmod/DagevilleCZAABC16}, a concept similar to ``row groups'' in columnar data formats like Parquet \cite{parquet} and ORC \cite{orc}.
As illustrated in \cref{fig:micro-partitions}, incoming data is automatically partitioned into \textit{immutable} files (i.e., micro-partitions) according to its natural ingestion order.
These micro-partitions reside in cloud object storage (e.g., AWS S3, Azure Blob Storage) and serve as fundamental units for both data retrieval and pruning during query processing.
Each micro-partition is maintained at a fixed size, typically a few tens of megabytes in practice (e.g., 32MB) \cite{DBLP:journals/pvldb/DurnerL023}.

To facilitate scan-set pruning, the system maintains metadata such as \zonemap{s} \cite{DBLP:conf/vldb/Moerkotte98} to store the min/max values for the columns in each micro-partition.
During query execution, predicates (e.g., \texttt{WHERE $col$ BETWEEN $a$ AND $b$}) are evaluated against the \zonemap of each micro-partition.
If the partition's min-max range falls outside the query predicate, the entire partition can be safely skipped, thus drastically reducing the amount of data read from cloud storage.

The cloud's ability to provision nearly unlimited resources on demand has made \textit{cost} a first-class citizen in system optimization \cite{DBLP:conf/cidr/ZhangLY24}.
In most cloud services, compute costs are directly tied to the CPU times with negligible setup fees \cite{aeodp}.
For the reclustering task, the storage cost remains stable because the overall data volume does not vary much in different clustering states.
Costs generated by sending S3 requests are minor compared to the compute and storage costs \cite{asp,DBLP:conf/closer/RosatiFPTL18a}.
Data transfer within the same region (e.g., from S3 buckets to compute instances) is free \cite{asp}.
Therefore, the overall operational cost for query processing and reclustering is dominated by the consumed CPU time.

\subsection{Benefits and Challenges of Clustering}

\begin{figure}[t!]
    \centering
    \includegraphics[width=\linewidth]{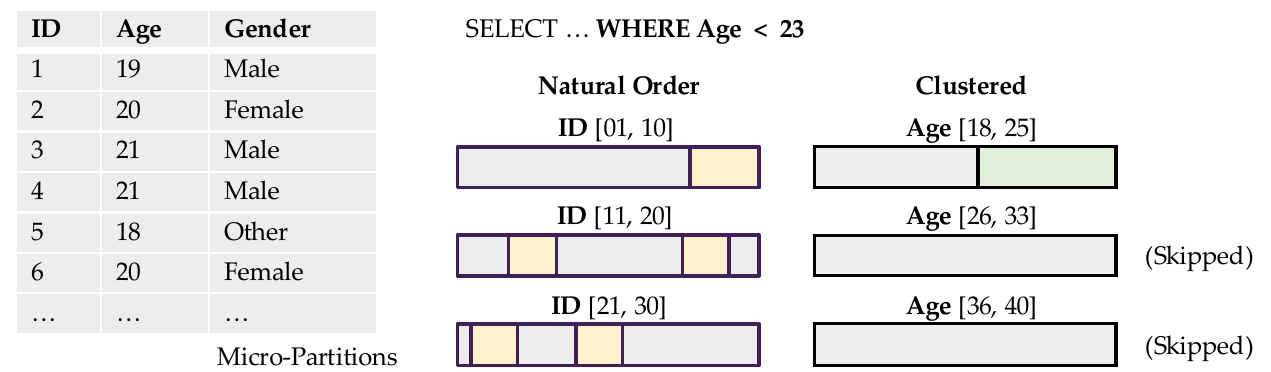}
    \caption{Benefits of Clustering. \textmd{The query predicated on \texttt{Age} must scan all micro-partitions in a naturally ordered table, while clustering by \texttt{Age} allows pruning all but the first micro-partition.}}
    \Description{Benefits of Clustering}
    \label{fig:clustering-benefit}
\end{figure}

By default, micro-partitions naturally cluster data based on its arrival order.
However, many queries include predicates that do not align with this order, making an alternative data clustering beneficial \cite{DBLP:conf/sigmod/BudalakotiZWKKW24,DBLP:conf/sigmod/DingABPJPPPPPSS24}.
A table is considered clustered if its micro-partitions are organized according to a specific order that aligns with common query patterns.
In this clustered state, rows that are logically close (e.g., having a small value distance in the specified column) are also stored physically close together, ideally within the same partition.

A key benefit of clustering is to improve the effectiveness of partition pruning.
As shown in \cref{fig:clustering-benefit}, after sorting the table by the \texttt{Age} column, the query filtering on \texttt{Age} only needs to scan one partition rather than three if the data were ordered differently by \texttt{ID}.
\marginnote{\ul{R1.O1}}[-1.2em]
\review{a}{Joins, LIMITs, and other SQL operations that benefit from partition pruning \cite{DBLP:conf/sigmod/ZimmererDKWOK25} naturally perform better when the table is clustered.}
Maintaining effective clustering is crucial to achieving optimal query performance, especially for large tables \cite{DBLP:conf/sigmod/ZimmererDKWOK25}.

Maintaining effective yet efficient clustering, however, presents significant challenges. Large tables in cloud data warehouses rarely remain static; they grow continuously through high-volume ingestions with diverse DML paths, including continuous streams and bulk loading \cite{DBLP:conf/bigdataconf/RuccoLS24}.
Selecting an effective clustering key with the most performance benefit (i.e., a single column or a combination of columns using methods such as Z-ordering \cite{morton1966computer}), poses additional challenges. As query workload patterns evolve, a clustering key that was once optimal may become sub-optimal or even ill-suited, leading to wasted resources and negative performance impacts.

\subsection{Related Work}
\label{sec:related}

\begin{table}[h!]
    \small
    \caption{Flexibility Dimensions of Related Work.
        \textmd{Representative systems classified by their policy and clustering key selection.
            Each cell denotes a distinct clustering maintenance strategy.}}
    \label{tab:flex}
    \begin{tabularx}{\linewidth}{>{\centering\arraybackslash}c
        |>{\centering\arraybackslash}c
        |>{\centering\arraybackslash}X
        |>{\centering\arraybackslash}X}
        \diagbox{\textbf{Policy}}{\textbf{Key}} & \makecell{Fixed}          & \makecell{Dynamic (\textit{Manual})}                           & \makecell{Dynamic (\textit{Workload})}                                              \\
        \hline
        Full Table                              & \ \ legacies\ \           & \makecell{\vspace{1.6em}}                                      & \makecell{Qd-tree \cite{DBLP:conf/sigmod/YangCWGLMLKA20}, Redshift \cite{redshift}} \\
        \hline
        \makecell{New Data}                     & \makecell{\vspace{1.6em}} & \makecell{Iceberg \cite{iceberg}, Delta Lake \cite{deltalake}} & Databricks \cite{databricks}                                                        \\
        \hline
        \makecell{\marginnote{\ul{R3.M2}}Incremental                                                                                                                                                                               \\\review{c}{(\textit{Data-Driven})}}    &  \makecell{\vspace{2em}}      & \makecell{Snowflake \cite{Snowflake}, Dremio \cite{dremio}} &   \\
        \hline
        \makecell{Incremental                                                                                                                                                                                                      \\\review{c}{(\textit{Workload-Aware})}} & \makecell{\vspace{2em}}& & \textbf{WAIR} (ours) \\
    \end{tabularx}
\end{table}

\marginnote{\ul{R4.O4}}[-1.2em]
\review{d}{
    The term ``partitioning'' is overloaded in database literature.
    A significant body of work focuses on \textit{horizontal partitioning} (or sharding) for distributed databases (e.g., Azure Synapse SQL \cite{DBLP:journals/pvldb/Aguilar-Saborit20,azure2025distributionadvisor}) or load balancing in streaming systems (e.g., Dalton \cite{DBLP:journals/pvldb/ZapridouMA22}).
    The primary goal of these techniques is to minimize cross-node traffic and distribute compute load.
    In contrast, modern cloud data warehouses adopt micro-partitioning within the cloud storage layer for \textit{data skipping}.
    This relies on lightweight metadata (e.g., zone maps) \cite{DBLP:journals/pvldb/ZiauddinWKLPK17,DBLP:conf/sigmod/ZimmererDKWOK25} and space-filling curves (e.g., Z-ordering \cite{morton1966computer}) to prune irrelevant micro-partitions during execution.
    In this paper, we target this micro-partition level optimization to facilitate effective pruning, which is orthogonal to and can coexist with prior works.}

A variety of research prototypes and commercial systems\footnote{Because most industrial implementations are proprietary features of commercial cloud data warehouses, our analysis relies on publicly accessible documentation, patents, and published literature gathered on our best-effort basis.} have been proposed to maintain table clustering.
A complete clustering solution usually comprises two algorithmic components:
\textbf{Reclustering Policy} (when/what to reorganize) and \textbf{Clustering Key Selection} (sort order).
We classify representative systems along these two dimensions in \cref{tab:flex} and discuss their limitations.

For the \textit{reclustering policy}, we consider four levels:
1) \textbf{Full Table:} Reorganizing the entire table during maintenance windows;
2) \textbf{New Data:} Clustering only newly ingested data;
3) \textbf{Incremental, Data-Driven:} Reclustering targeted subsets based on data distribution metrics (e.g., overlap) to capture most of the pruning benefit;
4) \textbf{Incremental, Workload-Aware:} The most adaptive class integrates workload statistics, data distribution, and a cloud cost model to prioritize the highest-impact partitions and schedule incremental reclustering at the optimal timing.
For \textit{clustering key selection}, we distinguish between:
1) \textbf{Fixed} keys defined at creation;
2) \textbf{Dynamic, Manual} keys updated by users for further optimization;
3) \textbf{Dynamic, Workload} keys automatically refined by the system, potentially using composite keys and hybrid layouts.

\textbf{Qd-tree}
tailors the block assignment strategy for a given query workload to reduce the number of blocks accessed when running that workload \cite{DBLP:conf/sigmod/YangCWGLMLKA20}.
\textbf{Redshift}
uses a multi-dimensional data layout sort key to reorganize tables at system idles \cite{DBLP:conf/sigmod/DingABPJPPPPPSS24, rsvt}.
While effective for static data, full table repartitioning is computationally prohibitive and less robust for large tables in cloud data warehouses with continuous ingestion and evolving workloads.

\begin{figure}[t!]
    \centering
    \includegraphics[width=\linewidth]{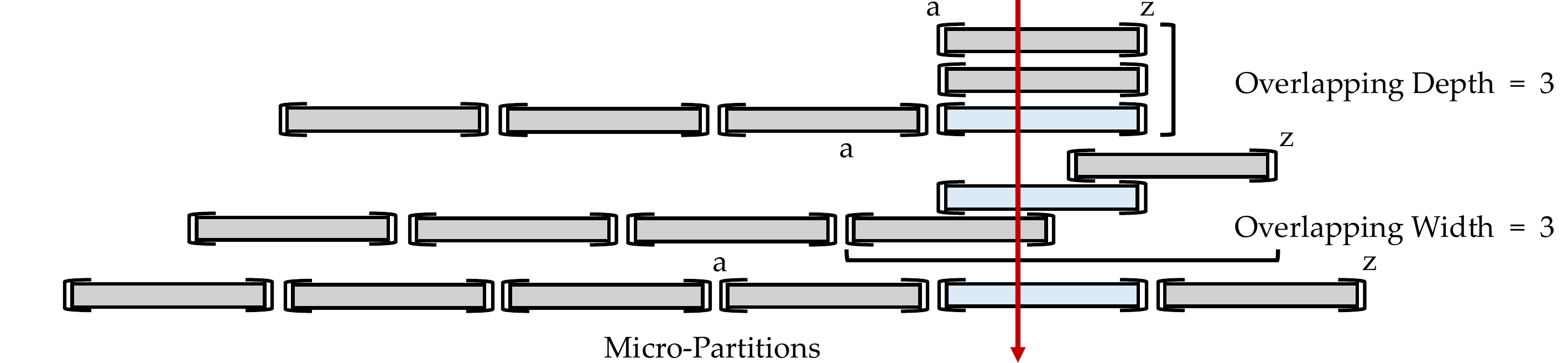}
    \caption{Overlapping Depth and Width. \textmd{Number of micro-partitions overlapping a given micro-partition (width) and a given point (depth). By reclustering, both the overlapping width and depth decrease from three to one.}}
    \Description{Overlapping Depth and Width.}
    \label{fig:overlap}
\end{figure}

\textbf{Iceberg}
supports in-place table evolution, where newly ingested data follow the updated partition layout or sort order \cite{ibe}.
\textbf{Delta Lake}
organizes files into stable ``$ZCubes$,'' each produced by a single \texttt{OPTIMIZE} job. During every optimization cycle, it clusters newly ingested files together with any undersized, partial $ZCubes$ left behind by \texttt{DELETE} operations \cite{dlfrlc, dlddlc}.
\textbf{Databricks}
extends Delta Lake by analyzing query workloads to identify the most effective clustering columns, then applies those keys to subsequent ingests \cite{dblc, dbalc}.
These approaches have two main drawbacks:
1) Existing historical data remains untouched,
and 2) independent ``sorted runs'' across batches degrade global clustering quality over time.

\textbf{Snowflake} evaluates a table's clustering quality by the overlapping metrics \cite{ckct}.
Using partition metadata, it counts the number of micro-partitions in the table whose key ranges overlap with a given micro-partition and a given point (see \cref{fig:overlap}). These metrics averaged across the table reflect its overall clustering quality.
\textbf{Dremio} uses the same metric, where only a small subset of files with the highest overlapping depth within their key ranges are reclustered at optimization cycles, until the table satisfies a user-defined overlapping depth threshold \cite{dmaic}.
While more efficient than full repartitioning, these data-driven techniques:
1) involve potential expensive scans to compute metrics,
2) rely on manual parameter tuning;
3) ignore query patterns, resulting in wasteful reclustering of rarely accessed historical data.

The two strategies outlined above typically require a predefined clustering key, which leads to additional drawbacks:
1) Key selection requires domain expertise.
2) Key updates fragment the layout (old data retains old keys);
3) Even composite keys (e.g., Z-ordering) may fail to serve diverse access patterns effectively.

\section{Boundary Micro-Partitions}
\label{sec:boundary}

A key observation behind incremental reclustering is that most of the query performance benefits come from reclustering only a strategic subset of partitions.
In this section, we introduce the concept of boundary partitions and explain why they are critical for pruning efficiency.
We then propose a simple greedy reclustering algorithm based on boundary partitions with a theoretical analysis of its amortized cost.

\subsection{The Reclustering Workflow}

\begin{figure}[t!]
  \centering
  \includegraphics[width=\linewidth]{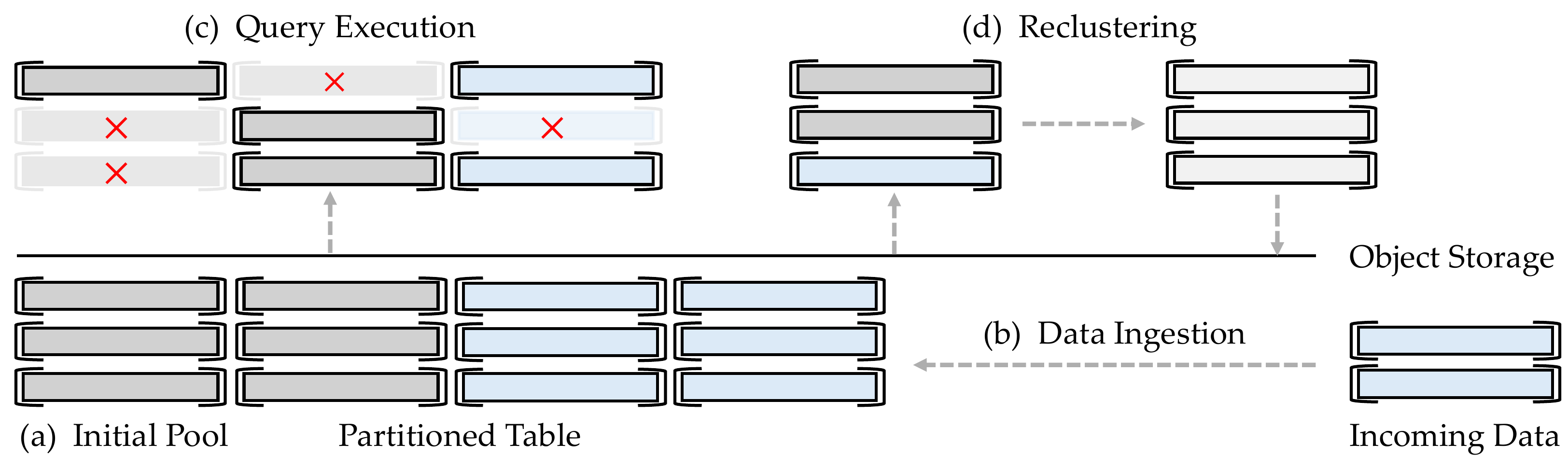}
  \caption{System Model with Reclustering. \textmd{(a) An existing data pool containing previously stored data. (b) Incoming data is partitioned and ingested in its natural order. (c) Query workloads use metadata to fetch only relevant partitions for processing. (d) Partitions are fetched, sorted, and then persisted back into storage.}}
  \Description{System Model}
  \label{fig:system-model}
\end{figure}

\cref{fig:system-model} shows the typical workflow in a cloud data warehouse.
An initial pool of micro-partitions exists in cloud object storage and is clustered based on certain clustering keys.
A \textbf{clustering key} is the column(s) used to sort a micro-partition.
For the theoretical analysis in this section, we assume a clustering key contains a single column.
New data is periodically ingested into database tables as micro-partitions.
These incoming micro-partitions are clustered in their \textbf{natural ingestion order}, which is typically different from the clustering-key order.

Each \textbf{query} in our problem context defines its scan set by specifying the target table along with the pushed-down range predicates.
Meanwhile, \textbf{reclustering} is triggered in the background, following a three-step process:
1) retrieve relevant micro-partitions from object storage,
2) sort the records by the clustering key and assemble new micro-partitions,
and 3) write newly created micro-partitions back to object storage.
Although reclustering is not on the critical path to affect query latency, it consumes additional compute resources.
Therefore, there is a trade-off between the immediate resource expenses on reclustering operations and the long-term resource savings on future queries.

\subsection{Micro-Partitions on the Query Boundaries}
\label{sec:greedy}

\begin{figure}[t!]
  \centering
  \includegraphics[width=\linewidth]{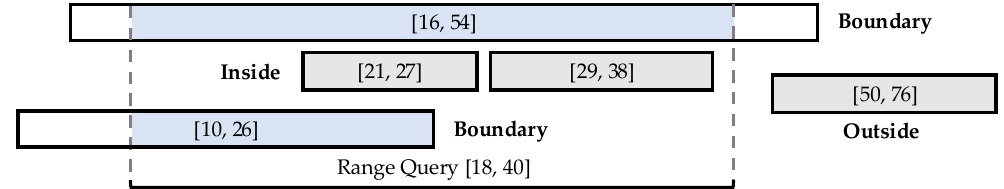}
  \caption{Partitions on the Boundary. \textmd{Boundary partitions overlap with query range edges and primarily affect pruning efficiency.}}
  \Description{Partitions on the Boundary}
  \label{fig:boundary}
\end{figure}

A basic assumption of workload-aware algorithms is that future queries follow a similar statistical pattern as the historical ones.
As depicted in \cref{fig:boundary}, we define \textbf{boundary micro-partitions} as those that \textit{partially} overlap with at least one already-queried range.
Each range query generates two sets of boundary partitions that overlap either the left or right border of its range predicate.
Micro-partitions that fall completely inside or outside any query ranges do not require reclustering because all the records in them are either included or excluded from query results.
Boundary micro-partitions, on the other hand, are critical to pruning efficiency because scanning these micro-partitions leads to retrieving unnecessary rows and therefore wastes I/O bandwidth during query execution.
\cref{alg:greedy} presents a basic greedy algorithm that reclusters all boundary micro-partitions after each query.
We then provide a theoretical analysis to show that this algorithm incurs at most \textit{logarithmic} additional cost.

\begin{algorithm}[t!]
  \caption{A Greedy Algorithm on Boundary Partitions}\label{alg:greedy}
  \begin{algorithmic}
    \Require Partitions $\{P_i\}$ with ranges $\{[a_i,\ b_i]\}$; Predicate $[l,\ r]$
    \vspace{0.5em}

    \State $boundaries \gets \{P_i \mid l \in [a_i,\ b_i]\}$
    \State \Call{Recluster}{$boundaries$}
    \vspace{0.5em}

    \State $boundaries \gets \{P_i \mid r \in [a_i,\ b_i]\}$
    \State \Call{Recluster\footnotemark}{$boundaries$}
  \end{algorithmic}
\end{algorithm}

\footnotetext{Duplicate partitions should be omitted from reclustering; however, their presence does not affect the theoretical analysis of the algorithm.}

\subsubsection*{\textbf{Assumptions}}
Without loss of generality, we assume each value in the clustering-key column is a real number, and each micro-partition is a multiset of $m$ such values.
We represent each micro-partition as $P[a_i, b_i]$ using its clustering-key range.
Consider performing a sequence of \emph{ingestion} and \emph{reclustering} operations on a table with a collection of initial micro-partitions, where

\begin{enumerate}[label=-, leftmargin=*]
  \item \texttt{Ingestion}: Add a new micro-partition of size $m$ into the table. The cost of this operation is excluded from the analysis.
  \item \texttt{Recluster}($x$): Given a value $x$, recluster all the micro-partitions whose range covers $x$ (i.e., $x \in [a_i, b_i]$). Skip micro-partitions with a degenerate range $P[x, x]$. Assume that the cost of reclustering a micro-partition is 1.
\end{enumerate}

Let $n$ denote the total number of micro-partitions ingested.
\marginnote{\ul{R4.O2}}[-1.2em]
\review{d}{
  Let $q$ denote the total number of range queries performed in the entire process.
  The number of reclustering operations is $2q$ because each range query $[{l,\ r}]$ is transformed into two reclustering operations at $l$ and $r$.}
We assume that only keys $1, 2, \dots, \review{d}{2q}$ might be operated with reclustering and all clustering keys lie in the range $(0, \review{d}{2q}+1)$; otherwise, we could rescale them using a piecewise linear function to make this true while preserving their relative orders.

\subsubsection*{\textbf{Potential Function}}
We use the classic potential method~\cite{cormen2022introduction} to examine the \emph{amortized cost} of the operations.
A potential function, typically denoted as $\Phi(S)$, maps each state of the data structure to a non-negative number.
``Potential'' here refers to a conceptual ``energy'' or ``prepaid work'' stored within a data structure.
For a micro-partition $P[a_i, b_i]$, we define its (reclustering) potential as:
\[
  \phi(a_i, b_i) \coloneq
  \begin{cases}
    0                                                     & \text{if $a_i = b_i$ or $\ceil{a_i} > \floor{b_i}$}, \\
    \review{d}{4 + 4 \log (\floor{b_i} - \ceil{a_i} + 1)} & \text{otherwise}.
  \end{cases}
\]

Note that $0 \le \phi(a_i, b_i) \le O(\log q)$.
A micro-partition with a wider key range has a larger $\phi(a_i, b_i)$ because it is more likely to be involved in a future reclustering operation.
The potential of all micro-partitions of a table (i.e., the data structure) is defined as $\Phi = \sum \phi(a_i, b_i)$.
Each ingestion operation increases the potential by $O(\log q)$.

\subsubsection*{\textbf{Micro-Partition Matching}}
Suppose \texttt{Recluster}($x$) takes $k$ micro-partitions $P[a_i, b_i]$, $i = 0, 1, \dots, k-1$ as input and outputs $k$ micro-partitions $P[c_i, d_i]$.
In this case, the potential of each input micro-partition $P[a_i, b_i]$ is at least 1.
We also note that the key ranges of the output micro-partitions are pairwise disjoint except at their endpoints.
Therefore, any output clustering key can exist in at most two non-degenerate micro-partitions.

\begin{lemma}
  We say an output $P[c_j, d_j]$ is \textbf{matched} to an input $P[a_i, b_i]$ if $[c_j, d_j] \subseteq [a_i, b_i]$.
  Then there exists a matching of size $k - 3$ between the input and output micro-partitions (i.e., there are at most 3 unmatched output micro-partitions).
\end{lemma}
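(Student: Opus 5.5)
The plan is to exploit the sorted structure of the output and then apply Hall's theorem, after discarding a constant number of outputs.

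\textbf{Setup.} All $k$ inputs contain $x$, so $a_i \le x \le b_i$ for every $i$. Reclustering sorts the $km$ values and cuts them into consecutive blocks of $m$. Hence the outputs can be indexed so that $c_0 \le d_0 \le c_1 \le d_1 \le \dots \le c_{k-1} \le d_{k-1}$, and $c_j$ is the $(jm+1)$-th smallest value overall. Call an output \emph{left} if $d_j \le x$, \emph{right} if $c_j \ge x$, and \emph{straddling} otherwise. By the ordering, at most one output straddles $x$ (ignoring degenerate ties at $x$). A left output $P[c_j,d_j]$ is contained in input $P[a_i,b_i]$ as soon as $a_i \le c_j$, because $d_j \le x \le b_i$ holds automatically. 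Symmetrically, a right output is contained in input $i$ iff $b_i \ge d_j$. So the bipartite graph has a clean threshold structure, and I would verify Hall's condition on a suitable subset of outputs.

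\textbf{Counting step.} Sort the inputs by left endpoint, $a_{(0)} \le \dots \le a_{(k-1)}$. Inputs $(j),\dots,(k-1)$ contain only values $\ge a_{(j)}$, so at most $jm$ values are $< a_{(j)}$. This forces $c_j \ge a_{(j)}$. Thus left output $j$ is adjacent to at least $j+1$ inputs, and these neighbourhoods are nested. The symmetric statement holds for right outputs using right endpoints.

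\textbf{Mixed sets and the main obstacle.} Take a set $S$ of left and right outputs. Let $j^*$ be its largest left index and $t^*$ its smallest right index. An input adjacent to nothing in $S$ must satisfy $a_i > c_{j^*}$ and $b_i < d_{t^*}$. All $m$ of its values then lie strictly between $c_{j^*}$ and $d_{t^*}$. The number of values in that open window is at most about $(t^*-j^*+1)m$ minus the endpoint values, so the number of such inputs is at most roughly $t^*-j^*-1$. Hence $|N(S)| \ge k-(t^*-j^*-1)$. Meanwhile $|S| \le (j^*+1)+(k-t^*)$, and these two bounds agree up to an additive constant.

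The main obstacle is getting this constant exactly right, since ties at $c_{j^*}$, $d_{t^*}$ and at $x$ must be handled carefully. This is where the loss of $3$ is spent. I would discard the (at most one) straddling output, together with the innermost left output and the innermost right output. After these removals the window argument gains enough slack that $|N(S)| \ge |S|$ for every $S$. Hall's theorem then gives a matching saturating the remaining $\ge k-3$ outputs.

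If the tie bookkeeping fails for some configuration, I would fall back to a direct greedy construction. In it, left outputs are matched from the outside in to inputs in increasing $a$-order, and right outputs to inputs in decreasing $b$-order. Each collision between the two greedy sequences then has to be charged to one of the three discarded outputs.
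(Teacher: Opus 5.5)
\paragraph{The gap.} Your claim that Hall's condition $|N(S)|\ge|S|$ holds once you discard the straddling output and the left and right outputs nearest to $x$ is false. Take $m=2$, $k=5$, $x=6$, with inputs $\{1,11\},\{2,10\},\{3,9\},\{4,8\},\{5,7\}$; each contains $x$. The outputs are $[1,2],[3,4],[5,7],[8,9],[10,11]$. You discard $[3,4]$, $[5,7]$ and $[8,9]$, but the two survivors $[1,2]$ and $[10,11]$ are each contained only in the input $[1,11]$, so no matching saturates them.

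Your own estimates show why. The window $(c_{j^*},d_{t^*})$ can hold $(t^*-j^*+1)m-2$ values. For $m\ge 2$ that leaves room for $t^*-j^*$ non-neighbouring inputs, not $t^*-j^*-1$, so you only get $|N(S)|\ge|S|-1$. This deficiency is carried by outputs with index at most $j^*$ or at least $t^*$, and your removal touches none of them. Left neighbourhoods are nested and grow toward $x$, and symmetrically on the right. So the outputs nearest $x$ are the easiest to match, and discarding them buys nothing. The bottleneck is the outermost outputs, which can all funnel into the single widest input. Your greedy fallback hits the same configuration: the collision is between $[1,2]$ and $[10,11]$, neither of which was discarded, and you do not actually supply a charging argument.

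\paragraph{How to repair it.} The rest of your counting is sound. The fix is to use the defect (K\"onig--Ore) form of Hall's theorem instead of deleting particular outputs:
\begin{itemize}
\item pure-left and pure-right sets satisfy $|N(S)|\ge|S|$;
\item any $S$ containing a degenerate output $[x,x]$ has every input as a neighbour;
\item mixed sets satisfy $|N(S)|\ge|S|-1$.
\end{itemize}
Hence all but one non-straddling output can be matched, and at most two outputs are unmatched overall. Alternatively, discarding the outermost left output instead of the innermost one already restores Hall's condition exactly.

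The paper argues from the input side instead. For a set $S$ of inputs, let $A=\min a_i$ and $B=\max b_i$. The $m|S|$ keys of $S$ force at least $|S|$ outputs to meet $[A,B]$. Of these, at most one sticks out on the left, at most one on the right, and at most one strictly contains $x$. Every remaining output lies in $[A,x]$ or $[x,B]$, so it is contained in the input of $S$ attaining $A$ or the one attaining $B$. This gives $|N(S)|\ge|S|-3$ directly, with no index or tie bookkeeping.
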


\begin{proof}
  Construct a bipartite graph $G$ whose left vertices represent the $k$ input micro-partitions and right vertices represent the $k$ output micro-partitions. Draw an edge from input $P[a_i, b_i]$ to output $P[c_j, d_j]$ if and only if $[c_j, d_j] \subseteq [a_i, b_i]$. We apply Hall's theorem \cite{Hall†1987} to show that $G$ has a matching of size $k - 3$. Let $S$ be a subset of the input micro-partitions, and let $N(S)$ be the neighbors of $S$ in the output micro-partitions. We only need to show that $|N(S)| \ge |S| - 3$ for all $S$.

  Let $A = \min_{i \in S} a_i$ and $B = \max_{i \in S} b_i$. Since each input micro-partition covers $x$, there is an input $P[a_i, b_i] \supseteq [A, x]$ and an input $P[a_j, b_j] \supseteq [x, B]$. Moreover, the input micro-partitions in $S$ provide a total of $m \cdot |S|$ keys in the range $[A, B]$, which implies that there are at least $|S|$ output micro-partitions whose ranges intersect with $[A, B]$. Among these outputs, at most one can have its left endpoint smaller than $A$, and at most one can have its right endpoint greater than $B$; the remaining $|S| - 2$ output micro-partitions must be subintervals of $[A, B]$. Furthermore, at most one output micro-partition $P[c_i, d_i]$ can strictly contain $x$ (i.e., $c_i < x < d_i$), and the remaining $|S| - 3$ outputs must be subintervals of either $[A, x]$ or $[x, B]$, which implies that they are the neighbors of $S$. Therefore, $|N(S)| \ge |S| - 3$ for all $S$, and the lemma follows.
\end{proof}

\begin{lemma}
  For all but $O(\log q)$ matched input-output pairs, $\review{d}{\phi(c_j, d_j) \le \phi(a_i, b_i) - 4}$.
\end{lemma}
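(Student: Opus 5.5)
Write $L(a,b)=\floor{b}-\ceil{a}+1$ for the number of integer keys in $[a,b]$, so that $\phi(a,b)=4+4\log L(a,b)$ in the non-degenerate case. A drop of $4$ in $\phi$ is a drop of $1$ in $\log L$, i.e.\ the output keeps at most half of the input's integer keys (taking $\log$ base $2$). The plan is to fix the matching given by the previous lemma and split the matched pairs $(P[a_i,b_i],P[c_j,d_j])$ into three types:
\begin{itemize}
  \item \emph{degenerate} outputs, where $\phi(c_j,d_j)=0$;
  \item \emph{halving} outputs, where $L(c_j,d_j)\le L(a_i,b_i)/2$;
  \item \emph{heavy} outputs, everything else.
\end{itemize}
We show only $O(\log q)$ heavy pairs exist.

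For the first two types the bound is immediate. Every input covers $x$, and $x$ is an integer key in $\{1,\dots,2q\}$, so $\ceil{a_i}\le x\le\floor{b_i}$ and $L(a_i,b_i)\ge1$. Hence $\phi(a_i,b_i)\ge4$, which handles degenerate outputs. For halving outputs, the inclusion $[c_j,d_j]\subseteq[a_i,b_i]$ together with $L(c_j,d_j)\le L(a_i,b_i)/2$ gives $\log L(c_j,d_j)\le\log L(a_i,b_i)-1$, so $\phi$ drops by at least $4$.

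For heavy pairs, recall from the proof of the previous lemma that the outputs have pairwise disjoint interiors, and at most one output strictly contains $x$. Discard that one output. Every other non-degenerate output lies entirely in $(-\infty,x]$ or entirely in $[x,\infty)$; by symmetry, treat the outputs left of $x$. Order these outputs by increasing distance from $x$ as $Q_1,Q_2,\dots$. Each $Q_t$ contains at least one integer key, otherwise it would be degenerate, and the $Q_t$ have disjoint interiors. Hence the integer keys are almost disjoint across them, with possible sharing only at endpoints, which costs at most a factor of $2$.

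Now suppose $Q_t$ is heavy, matched to an input $I$ with $L(I)<2L(Q_t)$. Since $I\ni x$ and $I\supseteq Q_t$, the input $I$ contains every integer between $Q_t$ and $x$. In particular $I$ contains all of $Q_1,\dots,Q_{t-1}$, so $L(I)\ge\sum_{s\le t}L(Q_s)-O(t)$. Combined with $L(I)<2L(Q_t)$, this says $L(Q_t)$ is roughly at least the total width $W_{t-1}=\sum_{s<t}L(Q_s)$ of all outputs between $Q_t$ and $x$. Therefore, along the heavy outputs in this order, the cumulative width $W$ at least (roughly) doubles each time a heavy output is passed. Since $W\le 2q+1$, there are only $O(\log q)$ heavy outputs per side, and $O(\log q)$ in total once the outputs straddling $x$ are added.

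The step I expect to be the main obstacle is making the doubling argument rigorous against off-by-one effects. These come from integer keys shared at common endpoints of adjacent outputs, and from the floor/ceiling definition of $L$, and they could spoil the factor of $2$ for very small widths. I would first try to absorb them by a case split: handle the heavy outputs with $L\le c$, for a constant $c$, using the fact that they occupy disjoint neighbourhoods of $x$, and count only heavy outputs with $L>c$ in the doubling chain. Failing that, I would replace the doubling with growth by a factor $1+\varepsilon$, which still gives $O(\log q)$ heavy outputs per side.
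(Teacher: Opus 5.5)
Your proof is correct. It rests on the same mechanism as the paper's: an input that covers both $x$ and its matched output must span everything between them, so an output that fails to halve can only occur when its width is comparable to its distance from $x$, and geometric growth limits such outputs to $O(\log q)$.

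The organization differs, though. The paper works with real-valued positions. It sets aside the $O(1)$ outputs meeting an integer in $[x-12,x+12]$. For the outputs to the right of $x$, it splits on whether $d_i-x\ge\tfrac32(d_{i-1}-x)$, which happens only $O(\log q)$ times. Otherwise $d_i-c_i\le(d_i-x)/3$, and the buffer of $12$ absorbs the floors and ceilings when converting widths into $\phi$. You instead measure everything in the integer count $L=\lfloor b\rfloor-\lceil a\rceil+1$ that the potential actually uses. You classify pairs by outcome (halving or heavy) and argue contrapositively that each heavy output forces the cumulative width to grow. This removes the buffer and the near-$x$ case, leaving only the single output that strictly contains $x$.

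The off-by-one issue you flag disappears if you use the multiplicity bound you already noted instead of the additive $-O(t)$ form. The additive form does not yield doubling when $t$ is large, so this replacement is needed.
\begin{itemize}
\item Each integer lies in at most two non-degenerate outputs, so for a heavy $Q_t$ you get $\sum_{s\le t}L(Q_s)\le 2L(I)<4L(Q_t)$.
\item Hence $W_{t-1}<3L(Q_t)$, and therefore $W_t>\tfrac43W_{t-1}$.
\item You have $1\le W\le 4q$; the cap is $4q$ rather than $2q+1$ because of the same double counting.
\item So there are at most $1+\log_{4/3}(4q)$ heavy outputs per side, with no case split on small widths.
\end{itemize}

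Finally, $\phi(a_i,b_i)\ge 4$ needs $a_i<b_i$ as well as $L(a_i,b_i)\ge1$. This holds because \texttt{Recluster} skips degenerate partitions $P[x,x]$; say so explicitly.
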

\begin{proof}
  We first prove this claim for output micro-partitions $P[c_j, d_j]$ with $c_j \ge x + 12$. Assume there are $r \le k$ such outputs $P[c_1, d_1], \ldots, P[c_r, d_r]$ satisfying $d_i \le c_{i+1}$ for $1 \le i < r$. Let $d_0 = x + 12$ for convenience, and let $P[a_i, b_i]$ be the input micro-partition matched with $P[c_i, d_i]$. Each of these $r$ outputs falls into either case:
  \begin{itemize}
    \item $(d_i - x) \ge 3/2 \cdot (d_{i-1} - x)$. Since the maximum $\floor{d_i}$ is at most $\review{d}{2q}$, there can only be $O(\log q)$ such output micro-partitions.
    \item $(d_i - x) < 3/2 \cdot (d_{i-1} - x)$. This simplifies to $3(d_i - d_{i-1}) < d_i - x$, which further implies
          $$
            d_i - c_i \le d_i - d_{i - 1} \le \frac{d_i - x}{3} \le \frac{d_i - x}{2} - 2 < \frac{b_i - a_i - 1}{2} - 1,
          $$
          where the second-to-last inequality holds because $d_i - x \ge 12$; the last inequality holds because $P[a_i, b_i]$ covers both $x$ and $P[c_i, d_i]$. Therefore,
          \review{d}{$$
              \phi(c_i, d_i) \le 4 + 4 \log (d_i - c_i + 1)\le 4 + 4 \log (\frac{b_i - a_i - 1}{2}) \le \phi(a_i, b_i) - 4.
            $$}
  \end{itemize}
  That is, for all but $O(\log q)$ output micro-partitions $P[c_i, d_i]$ with $c_i \ge x + 12$, we have \review{d}{$\phi(c_i, d_i) \le \phi(a_i, b_i) - 4$.} By symmetry, the same holds for output micro-partitions $P[c_i, d_i]$ with $d_i \le x - 12$. For the remaining partitions, either:
  \begin{itemize}
    \item $P[c_i, d_i]$ is non-degenerate and covers an integer in $[x - 12, x + 12]$. There are only $O(1)$ such micro-partitions, since at most two non-degenerate outputs cover any given integer.
    \item $P[c_i, d_i]$ is degenerate or does not cover any integer. In either case, its potential is 0 and is \review{d}{at least 4 less than its matched input partition (which is at least 4).}
  \end{itemize}
  Putting all cases together completes the proof.
\end{proof}

\review{d}{
  \begin{lemma}
    \label{lem:recluster-cost}
    A reclustering operation of $\,k$ micro-partitions decreases the potential $\Phi$ by at least $4k - O(\log q)$.
  \end{lemma}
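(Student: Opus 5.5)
The plan is to compare the total potential of the $k$ inputs with that of the $k$ outputs, using the matching from the first lemma and the per-pair bound from the second lemma. Write the potential change as
\[
\Delta\Phi = \sum_{j} \phi(c_j, d_j) - \sum_{i} \phi(a_i, b_i),
\]
where both sums range over the $k$ micro-partitions of the operation. All other micro-partitions are untouched, so this is the whole change in $\Phi$.

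\textbf{Step 1: fix the matching.} By the first lemma there is a matching $M$ of size $k-3$ between inputs and outputs. This leaves three unmatched inputs and three unmatched outputs. Every output has potential $O(\log q)$, because its range lies in $(0, 2q+1)$. Each unmatched input has potential at least $0$, so dropping it from the input sum can only help the bound. Hence the unmatched terms contribute at most $3 \cdot O(\log q) = O(\log q)$ to $\Delta\Phi$.

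\textbf{Step 2: split the matched pairs.} Apply the second lemma to the pairs in $M$. For all but $O(\log q)$ pairs $(P[a_i,b_i], P[c_j,d_j])$ we have $\phi(c_j,d_j) - \phi(a_i,b_i) \le -4$. For each exceptional pair, use the crude bound $\phi(c_j,d_j) - \phi(a_i,b_i) \le \phi(c_j,d_j) - 4 \le O(\log q)$. This uses the observation made before the first lemma: every input covers $x$ non-degenerately, so it contains the integer $x$ and has potential at least $4$.

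\textbf{Step 3: sum.} Adding the contributions gives
\[
\Delta\Phi \le -4\bigl(k - 3 - O(\log q)\bigr) + O(\log q) \cdot O(\log q) + O(\log q).
\]

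This is where I expect the main obstacle. Done naively, the exceptional pairs contribute $O(\log q)$ pairs times $O(\log q)$ each, which is $O(\log^2 q)$ rather than $O(\log q)$. To avoid this, I would bound the exceptional outputs' total potential more carefully. The exceptional outputs fall into two kinds. The first kind are the $O(1)$ outputs near $x$, which contribute $O(\log q)$ in total. The second kind are the ``growth'' outputs with $d_i - x \ge \frac{3}{2}(d_{i-1}-x)$. For these, each output width satisfies $d_i - c_i \le d_i - x$, and the distances $d_i - x$ grow geometrically. So their potentials are at most $4 + 4\log(d_i - x + 1)$. Summing over geometrically growing $d_i - x$ still gives $\sum_t O(t) = O(\log^2 q)$, so geometric growth alone is not enough.

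The fix is to compare each growth output with its own matched input instead of with zero. Its matched input covers both $x$ and $[c_i, d_i]$, so $b_i - a_i \ge d_i - x \ge d_i - c_i$. Since $\phi$ is monotone in width, $\phi(c_i,d_i) \le \phi(a_i,b_i)$ up to the rounding of $\lceil\cdot\rceil$ and $\lfloor\cdot\rfloor$. So each exceptional pair contributes at most $O(1)$ to $\Delta\Phi$, not $O(\log q)$; I would verify this rounding inequality directly from the definition of $\phi$.

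With that refinement, the exceptional pairs cost $O(\log q) \cdot O(1)$, which is $O(\log q)$ in total. The sum becomes $\Delta\Phi \le -4k + O(\log q)$, i.e.\ the potential drops by at least $4k - O(\log q)$, as the statement claims.
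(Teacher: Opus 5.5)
Your proposal is correct, and once you make the refinement at the end it is exactly the paper's argument: unmatched outputs add $O(\log q)$, unmatched inputs only help, every matched pair satisfies $\phi(c_j,d_j)\le\phi(a_i,b_i)$, and all but $O(\log q)$ matched pairs drop by at least $4$. The rounding slack you worry about is in fact zero, since $[c_j,d_j]\subseteq[a_i,b_i]$ makes the integers in the output a subset of those in the input, so the monotonicity holds exactly rather than only up to $O(1)$.
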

}
\begin{proof}
  We analyze the potential change $\Delta \Phi$ of a reclustering operation as follows:
  \begin{itemize}
    \item At most 3 unmatched output micro-partitions are created, which increase the potential by at most $O(\log q)$.
    \item At most 3 unmatched input micro-partitions are removed, which do not increase the potential.
    \item For each matched pair of an input $P[a_i, b_i]$ and an output $P[c_j, d_j]$, we have $\phi(c_j, d_j) \le \phi(a_i, b_i)$.
          For all but $O(\log q)$ matched pairs, we have \review{d}{$\phi(c_j, d_j) \le \phi(a_i, b_i) - 4$.}
  \end{itemize}
  Taking a summation over all these changes yields the lemma.
\end{proof}

\marginnote{\ul{R4.O2}}[-1.2em]
\review{d}{
  If a range query accesses $k$ boundary micro-partitions, it incurs $k$ cost to fetch and another $k$ cost to recluster them.
  Combined with a potential change of at most $O(\log q) - 4k$, the amortized cost of fetching and reclustering boundary micro-partitions is at most $O(\log q)$ per query.
  Non-boundary micro-partitions accessed by a query are fully utilized and their cost is bounded by the query's optimal output size.
  Putting all pieces together, we have:
}

\begin{theorem}
  \label{thm:greedy}
  For a sequence of batched ingestions and range queries, the algorithm achieves a total cost of less than
  \[
    O((n + q) \log q) + \sum \ceil[\big]{|\textup{output}_i| / m},
  \]
  where $\textup{output}_i$ is the number of keys returned by the $i$-th range query.
\end{theorem}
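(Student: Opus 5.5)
We would prove \cref{thm:greedy} with the potential method, taking $\Phi=\sum\phi(a_i,b_i)$ over the current micro-partitions as defined above. The goal is to show that the total actual cost is at most the total amortized cost plus the initial potential, and that the total amortized cost is at most $O((n+q)\log q)+\sum\ceil{|\text{output}_i|/m}$. We would treat the initial pool as part of the $n$ ingested micro-partitions, as in the theorem's accounting. Then $\Phi$ starts at $0$ and stays non-negative. Because $\Phi_{\text{final}}\ge 0$, the actual cost is bounded by the sum of the actual costs plus the potential changes.

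\textbf{Step 1 (ingestions).} Each ingestion has actual cost $0$ under the stated assumptions. It raises $\Phi$ by one term $\phi\le 4+4\log(2q+1)=O(\log q)$. Over all ingestions this contributes $O(n\log q)$.

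\textbf{Step 2 (a query $[l,r]$).} We split the accessed micro-partitions into boundary ones, which cover $l$ or $r$ with a non-degenerate range, and the rest.
\begin{itemize}
\item Suppose \texttt{Recluster}$(l)$ touches $k_l$ partitions. The query pays $k_l$ to fetch them and $k_l$ to recluster them. By \cref{lem:recluster-cost}, $\Phi$ drops by at least $4k_l-O(\log q)$, so the amortized cost is $2k_l-4k_l+O(\log q)\le O(\log q)$. The same argument applies to $r$.
\item Degenerate partitions $P[x,x]$ are skipped by reclustering. A partition with $l\in(a_i,b_i]$ that is degenerate at $l$ lies entirely inside $[l,r]$, so we count it with the non-boundary ones.
\item The remaining accessed micro-partitions lie entirely inside $[l,r]$, so every one of their $m$ keys is output. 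Reclustering the boundary partitions first only changes which partitions exist, not this containment property. Their number is therefore at most $\ceil{|\text{output}_i|/m}$, more precisely $\lfloor|\text{output}_i|/m\rfloor$.
\end{itemize}
One point to verify is that the reclusterings at $l$ and at $r$ are applied sequentially, so that \cref{lem:recluster-cost} applies to each. A partition covering both endpoints is reclustered at $l$ and then possibly again at $r$; each time it is charged separately, and each time the potential pays for it. The footnote's duplicate handling only reduces cost.

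\textbf{Step 3 (summing).} Telescoping the potential gives
\[
\text{total cost}\le O(n\log q)+q\cdot O(\log q)+\sum_i\ceil{|\text{output}_i|/m}-\Phi_{\text{final}},
\]
and since $\Phi_{\text{final}}\ge 0$ the theorem follows.

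The main obstacle is in Step 2: making sure that every accessed micro-partition is charged exactly once, either as a boundary partition covered by the potential drop or as a fully contained partition covered by the output term. This requires fixing the order of operations, namely reclustering before scanning so that the scan is charged against the post-reclustering layout, or equivalently charging the fetch to the boundary term. It also requires checking that the $O(\log q)$ additive slack in \cref{lem:recluster-cost} absorbs the $2k$ actual cost. That holds because the lemma's per-partition drop of $4$ exceeds the per-partition actual cost of $2$.
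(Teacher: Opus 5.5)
Your proposal is correct and is essentially the paper's own argument. Ingestions add $O(\log q)$ potential each; each endpoint's boundary set of size $k$ costs $2k$ (fetch plus recluster) against the drop of $4k-O(\log q)$ from \cref{lem:recluster-cost}; and the remaining scanned partitions lie inside $[l,r]$ and are charged to $\sum\ceil{|\textup{output}_i|/m}$.

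You handle degenerate partitions and the sequential reclusterings at $l$ and $r$ more explicitly than the paper's sketch does. One correction: the two orderings you call equivalent are not quite equivalent. The paper charges the fetch of the pre-reclustering boundary set. Scanning the post-reclustering layout instead leaves $O(1)$ partially overlapping partitions per endpoint, and you would need to absorb these into the $O(\log q)$ slack.
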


\review{d}{
  \subsection{Greedy Algorithm Extensions}
  \label{sec:adjustments}
  \normalmarginpar\marginnote{\color{black}\ul{R4.O1}}[-1.2em]

  \newcommand{\Cstart}{C_{\mathrm{start}}}
  \newcommand{\kMin}{k_{\mathrm{min}}}
  \newcommand{\kMax}{k_{\mathrm{max}}}
  \newcommand{\tResel}{t_{\mathrm{resel}}}
  \newcommand{\KSel}{K_{\mathrm{sel}}}
  \newcommand{\KNon}{K_{\mathrm{non}}}

  To bridge the gap between theory and practice, we propose three modifications to the greedy algorithm in \cref{alg:greedy}.
  These adjustments address system constraints (e.g., memory limits) while still being amenable to theoretical analysis:

  \begin{enumerate}[leftmargin=*]

    \item \textbf{Memory Limit:} Let $\kMax = \Omega(\log n)$ be the memory capacity for reclustering micro-partitions.
          If a target set $|S| > \kMax$, we split $S$ into disjoint subsets of size $\kMax$ and recluster them separately to avoid expensive external-memory sorting.

    \item \textbf{Warm Start:} Let $\Cstart = \Omega(\log n)$ be a chosen parameter.
          We enforce a budget of $\Cstart \cdot t$ over the baseline cost during the first $t$ steps to prevent initial cost spikes.
          A subset of proper size is allowed if a reclustering operation violates this constraint.

    \item \textbf{Clustering Key Reselection:} Every $\tResel = \Theta(n)$ operations, we re-evaluate the clustering key for subsequent reclustering operations to use.
          This enables adaptation to workload shifts while preventing rapid oscillations.
  \end{enumerate}

  We analyze the cost of the extensions using the same potential function analysis as in \cref{sec:greedy}.
  By \cref{lem:recluster-cost}, a reclustering operation of $\kMax$ micro-partitions incurs non-positive amortized cost.
  Thus, each range query and its subsequent reclustering operations incur an amortized cost of at most $O(\log n)$ under the memory limit.
  The warm-start phase affects only the initial $O(n)$ steps and does not impact the amortized cost thereafter.
  Periodic clustering-key reselection adds $O(n \log n)$ potential every $\Theta(n)$ operations, incurring at most $O(\log n)$ to the amortized cost per operation.
  Putting all pieces together, we have the following theorem.

  \begin{theorem}
    For a sequence of batched ingestions and range queries, the adjusted greedy algorithm achieves a total cost of at most
    \[
      O((n + q) \log n) + \sum \ceil[\big]{|\textup{output}_i| / m} + \textup{(warm starting cost)},
    \]
    where $\textup{output}_i$ is the number of keys returned by the $i$-th range query, and the warm starting phase lasts for at most $O(n)$ operations.
  \end{theorem}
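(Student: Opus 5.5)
We run the same potential argument as for \cref{thm:greedy}, with $\Phi=\sum\phi(a_i,b_i)$, and handle each of the three extensions by charging it against $\Phi$. Since only $2q$ reclustering keys exist and $\phi\le 4+4\log(2q)$, every single-partition potential is $O(\log q)$. We may take $q\le \mathrm{poly}(n)$, or else rescale so that only keys actually hit matter; we write the bound as $O(\log n)$ to match the statement, absorbing the case $q\gg n$ into the $O(q\log q)$ term. The total cost of a range query splits into three parts. Fetching non-boundary partitions costs at most $\ceil{|\textup{output}_i|/m}$ plus $O(1)$ for the two boundary-straddling positions. Fetching boundary partitions costs $k$. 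Reclustering them costs another $k$. Each ingestion adds at most $O(\log n)$ to $\Phi$, so ingestions contribute $O(n\log n)$ in total.

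\textbf{Memory limit.} A boundary set $S$ at a key $x$ is split into chunks $S_1,\dots,S_t$, each of size at most $\kMax$. Every chunk still consists of partitions covering $x$, so \cref{lem:recluster-cost} applies to each chunk separately. Reclustering a chunk of size $k_j$ costs $2k_j$ (fetch and sort) and lowers $\Phi$ by at least $4k_j-c\log n$. Every chunk except possibly the last has size exactly $\kMax$. Choosing $\kMax\ge c'\log n$ with a large enough constant gives $2\kMax\le 4\kMax-c\log n$, so full chunks have non-positive amortized cost. The final partial chunk has amortized cost at most $O(\log n)$. Hence each of the two reclustering operations per query is amortized $O(\log n)$, giving $O(q\log n)$ overall. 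This step is the main obstacle: it needs an explicit tracking of the constant in the $O(\log q)$ loss of \cref{lem:recluster-cost}, to confirm that a chunk of size $\Omega(\log n)$ really pays for itself. The factor $4$ in $\phi$ was evidently chosen to give slack of $2k$ beyond the actual cost $2k$.

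\textbf{Clustering key reselection.} When the key changes, every partition's range must be re-expressed in the new key, so $\Phi$ can jump by up to $O(\log n)$ per live partition. There are at most $n$ live partitions, since reclustering preserves the count, so the jump is $O(n\log n)$. Reselection happens every $\Theta(n)$ operations, and there are $n+q$ operations, so the total added potential is $O\big(((n+q)/n)\cdot n\log n\big)=O((n+q)\log n)$. The potential is defined relative to the current key. Between reselections the matching argument is unchanged, because the current reclustering always sorts by the current key.

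\textbf{Warm start.} During the first $O(n)$ operations the algorithm may recluster only a subset. The cost actually spent there is at most the explicit budget, and we leave it as the separate warm-starting term. The potential is still non-negative at the end of the phase, and reclustering a subset of a boundary set never increases $\Phi$ beyond the $O(\log n)$ additive term of \cref{lem:recluster-cost}. So the analysis after warm start begins from some state with $\Phi\le O(n\log n)$, and that amount is absorbed into $O(n\log n)$. Skipped reclusterings leave wide partitions in place. Their later fetch cost is paid either by the budget within the phase or by the amortized argument afterwards, because $\Phi$ still bounds future boundary work.

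Summing the pieces, the total actual cost is the sum of amortized costs plus $\Phi_0-\Phi_{\mathrm{end}}$, which is at most $O(n\log n)$ since $\Phi_{\mathrm{end}}\ge 0$. This gives the stated bound $O((n+q)\log n)+\sum\ceil{|\textup{output}_i|/m}+(\text{warm starting cost})$.
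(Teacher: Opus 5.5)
Your structure matches the paper's sketch. You use the potential of \cref{sec:greedy}, apply \cref{lem:recluster-cost} to each chunk of size $k_{\mathrm{max}}=\Omega(\log n)$ so that full chunks have non-positive amortized cost, spread an $O(n\log n)$ jump per reselection over $\Theta(n)$ operations, and keep the warm-start cost as a separate term. The step that fails as written is the passage from $\log q$ to $\log n$. Under the global rescaling to keys $1,\dots,2q$, several quantities are of order $\log q$: each single-partition potential, each ingestion's increase, each reselection reset, and the additive loss in \cref{lem:recluster-cost}. Nothing in the hypotheses gives $q\le\mathrm{poly}(n)$. You cannot absorb the case $q\gg n$ into an $O(q\log q)$ term, because the target bound has no such term, and $O((n+q)\log n)$ is strictly smaller once $q$ is superpolynomial in $n$. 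In that regime the step you flag as the main obstacle actually breaks. The loss $c\log q$ exceeds the slack $2k_{\mathrm{max}}$ of a chunk with $k_{\mathrm{max}}=\Theta(\log n)$, so full chunks no longer pay for themselves. Likewise, a reselection adds $O(n\log q)$, not $O(n\log n)$.

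The missing idea is to redo the rescaling in every reselection epoch. The potential is redefined every $t_{\mathrm{resel}}=\Theta(n)$ operations anyway, and such an epoch contains $O(n)$ queries, hence $O(n)$ reclustering positions. So rescale keys per epoch, making only that epoch's positions the integers $1,\dots,O(n)$. Inside an epoch, every $\phi$ and the $O(\log q)$ exceptional count in the lemma preceding \cref{lem:recluster-cost} then become $O(\log n)$. Ingestions therefore add $O(\log n)$ each, and full chunks of size $k_{\mathrm{max}}=\Omega(\log n)$, with a large enough constant, have non-positive amortized cost. At each epoch boundary the fresh potential is at most $O(\log n)$ per live partition, i.e. $O(n\log n)$, since reclustering preserves the partition count. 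There are $O(1+q/n)$ epochs, which gives $O((n+q)\log n)$ with no assumption relating $q$ and $n$. This is what the paper's claim that reselection ``adds $O(n \log n)$ potential every $\Theta(n)$ operations'' tacitly relies on. With this change the rest of your argument goes through, including your warm-start bookkeeping, which is more explicit than the paper's.
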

}

\section{WORKLOAD-AWARE RECLUSTERING}
\label{sec:wair}

In this section, we build upon the concepts of boundary micro-partitions and propose a practical automatic reclustering framework.
We first introduce a key metric ``utilization'' to quantify the expected savings of reclustering each boundary micro-partition.
We then present a cost model that trades off between the reclustering overhead and the expected query execution savings to guide the proper timing and micro-partition selection for reclustering.
Finally, we propose a hybrid layout where different micro-partitions can adopt different clustering keys to maximize the reclustering benefit under a diverse workload.

\subsection{Micro-Partition Utilization}

\begin{figure}[t!]
    \centering
    \includegraphics[width=\linewidth]{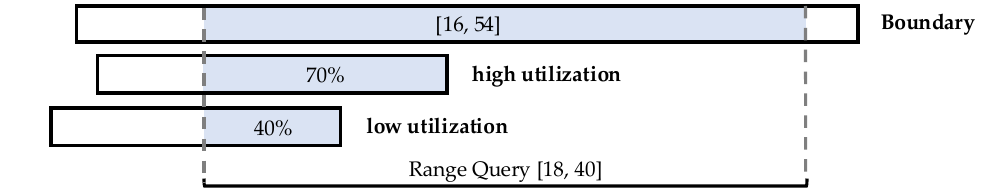}
    \vspace{-1.5em}
    \caption{Micro-Partition Utilization. \textmd{Utilization measures percentage of a micro-partition's bytes actually needed by the query.}}
    \Description{Micro-Partition Utilization}
    \label{fig:utilization}
\end{figure}

As illustrated in \cref{fig:utilization}, we define the \textbf{utilization} of a micro-partition with respect to a query as the ratio between the number of bytes read by the query and the total number of bytes of the micro-partition\footnote{The utilization calculation in our prototype system considers practical aspects such as projection pushdown, row groups in storage format, and data encoding and compression (see \cref{sec:impl}).}.
A low utilization indicates poor clustering: the system reads the full micro-partition, yet the query uses only a small fraction of the retrieved rows.
Reclustering poorly utilized micro-partitions can yield more pruning benefits.
For example, reclustering 100 micro-partitions at utilization $10\%$ could generate 10 fully utilized micro-partitions for the same query, thus reducing the scan set by $\approx10\times$ for similar future queries.

\subsection{Cost-based Trade-off}
\label{sec:cost}

In cloud data warehouses, reclustering imposes a one-time cost but delivers future performance benefits.
An effective reclustering policy, therefore, initiates the operation only when its expected gains outweigh incurred costs.
To quantify this trade-off, we develop a cost model that translates each micro-partition's utilization into predicted query-time savings.
By comparing predicted savings with reclustering overhead, the cost model guides our reclustering policy to the optimal timing and target micro-partition set.

Let $cost_{q}(Q)$ denote the cost of executing a single query $Q$.
Let $cost_{r}(\mathcal{P})$ represent the cost of reclustering a set of micro-partitions $\mathcal{P} = \{P_i\}$.
Given a preceding sliding time window $W$, as shown in \cref{fig:window}, we estimate the cost reduction $\Delta \hat{cost}_{q}(\mathcal{P} \mid Q)$ if the micro-partition set $\mathcal{P}$ were reclustered.
Then, the \texttt{Recluster}$(\mathcal{P})$ operation is triggered only when the predicted aggregate savings from all queries in $W$ outweigh the estimated reclustering expense, and the current reclustering ``debt'' is under a predefined limit:
\begin{equation}
    \Delta \hat{cost} = \hat{cost}_{r}(\mathcal{P}) - \sum_{Q\in W} \Delta \hat{cost}_{q}(\mathcal{P} \mid Q) < 0
\end{equation}
\begin{equation}
    \label{eq:cost-limit}
    \Delta cost = \sum_{\mathcal{P}_i \in W}cost_r(\mathcal{P}_i) - \sum_{Q \in W} \Delta cost_{q}(Q) < cost\_limit
\end{equation}

\begin{figure}[t!]
    \centering
    \includegraphics[width=\linewidth]{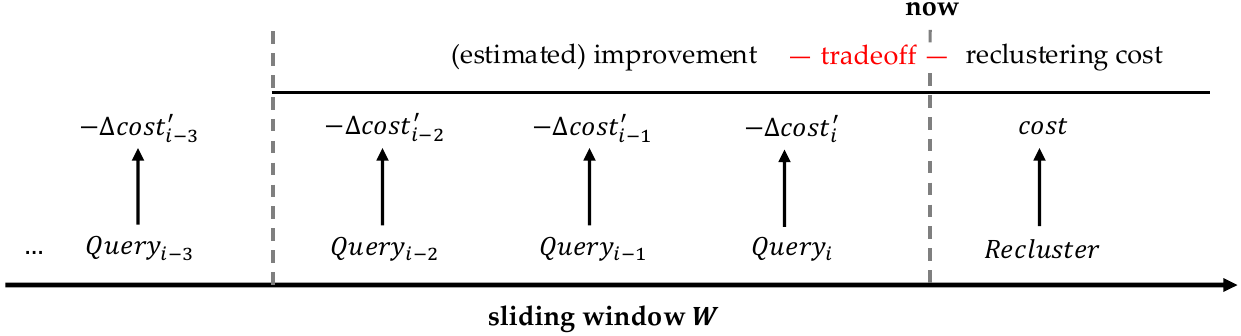}
    \vspace{-1.5em}
    \caption{Tradeoff between Reclustering Cost and Estimated Improvement. \textmd{Reclustering incurs a fixed overhead. Its potential benefit is estimated by summing the projected improvements over a preceding sliding time window. If the estimated benefit outweighs its overhead, a reclustering operation is triggered.}}
    \Description{Tradeoff between Reclustering Cost and Estimated Improvement.}
    \label{fig:window}
\end{figure}

The purpose of the cost limit is to bound the wasted effort under a dramatic workload shift.
We next express the above estimated costs in CPU times.
For a query $Q$, the estimated execution time reduction contributed by a particular micro-partition $P$ is:
\begin{equation}
    \label{eq:query-time}
    \Delta \hat{t}_{q}(P\mid Q) = \frac{(1 - u) \cdot size_{read}(P)}{size_{read}(Q)} \cdot t_{read}(Q)
\end{equation}
where $u$ denotes the utilization of $P$ with respect to $Q$, and $(1 - u) \cdot size_{read}(P)$ represents the scan size that could be saved after reclustering $P$.
Then $(1 - u) \cdot size_{read}(P) / size_{read}(Q)$ represents the estimated fraction of query read time saving.

The time estimation $t_r(\mathcal{P})$ for reclustering a micro-partition set $\mathcal{P}$ depends on whether $\mathcal{P}$ fits in memory.
If they fit, we perform an in-memory quicksort on the records according to their clustering keys.
Otherwise, we fall back to an external merge sort.
In both cases, $t_r(\mathcal{P})$ is expressed as a function of the aggregate size of $\mathcal{P}$.

\begin{algorithm}[t!]
    \caption{Workload-Aware Reclustering Policy}
    \label{alg:cap}
    \begin{algorithmic}
        \Require All micro-partitions $\mathcal{P}$; Time window $W$
        \For{$P \in \mathcal{P}$}
        \State $\Delta \hat{cost}_q(P) \gets \sum_{Q \in W} \Delta \hat{cost}_q(P\mid Q)$
        \EndFor

        \vspace{0.5em}
        \State $[P_i] = sort_{desc}(\{\Delta \hat{cost}_q(P), P \in \mathcal{P}\})$
        \For {$cut \in \{1, 2, \dots, \lvert \mathcal{P}\rvert\}$}
        \State $[P_i]_{cut} = \{P_1, P_2, \dots, P_{cut}\}$
        \State $\Delta \hat{cost} = \hat{cost}_{r}([P_i]_{cut}) - \sum_{P \in [P_i]_{cut}} \Delta \hat{cost}_{q}(P)$
        \EndFor
        \State $cut^* \gets \mathop{\arg\min}_{cut} \Delta \hat{cost}$ \Comment the smaller the better

        \vspace{0.5em}
        \If{$\Delta \hat{cost} < 0$ \textbf{and} \\\hspace{0.8em} current $\Delta cost + \hat{cost}_{r}([P_i]_{cut^*}) < cost\_limit$}
        \State \Call{Recluster}{$[P_i]_{cut^*}$}
        \EndIf
    \end{algorithmic}
\end{algorithm}

Additionally, we dynamically adjust the sliding window size based on its prediction accuracy to adapt to the evolution of the query workload and the data distribution.
After completing a window $W$ of queries, we compare the actual cost reduction $\Delta cost_q = \sum_{Q \in W}\Delta cost_q(Q)$ against the predicted cost reduction $\Delta \hat{cost}_q = \sum_{Q \in W}\Delta \hat{cost}_q(Q)$.
If $\Delta cost_q > \Delta \hat{cost}_q$, it indicates that the workload is relatively stable, and we therefore double the window size to capture a longer query history.
Otherwise, we halve the window size to focus more narrowly on the most recent workload shifts and distribution changes.

\cref{alg:cap} summarizes our workload-aware reclustering policy.
The system triggers potential \texttt{Recluster} operations right after completing each query.
To determine the set of micro-partitions for reclustering, we loop through all the micro-partitions accessed in $W$.
For each micro-partition $P$, we first estimate the cost reduction $\Delta \hat{cost}_q(P)$ for all queries in $W$ that could benefit from reclustering $P$.
We then sort the micro-partitions according to the cost reduction in descending order and determine a $cut$ in the sorted sequence where reclustering all the micro-partitions before the $cut$ yields the largest estimated cost reduction.
Finally, we perform these reclustering operations subject to a cost limit, as shown in \cref{fig:window}.
The size of $W$ is adjusted after each sliding window as described above.
Reclustering decisions primarily reuse query-execution statistics already logged by the system, incurring minimal overhead.

\review{a}{
    \subsection{Cost Model Extensions}
    \label{sec:extensions}
    \marginnote{\color{black}\ul{R1.O4}}[-1.2em]

    We introduce two cost model extensions to allow adjusting reclustering policies to specific budget constraints or performance goals.

    \smallskip
    \noindent\textbf{(1) Reclustering Aggressiveness.}
    Our standard cost model triggers reclustering when estimated query savings exceeds immediate reclustering costs.
    To accommodate users' different risk preferences, we extend the model with an aggressiveness factor $\alpha > 0$ and a dynamic budget constraint:
    \begin{equation}
        \label{eq:alpha}
        \sum_{Q\in W} \Delta \hat{cost}_{q}(\mathcal{P} \mid Q) > \alpha \cdot \hat{cost}_{r}(\mathcal{P})
    \end{equation}
    \begin{equation}
        \label{eq:credits}
        \hat{cost}_{r}(\mathcal{P}) \le \sum_{j > i} \Delta {cost}_{q}(\mathcal{P}_i \mid Q_j) - \sum cost_{r}(\mathcal{P}_i) + c \cdot \vert Q \vert
    \end{equation}
    As shown in \cref{eq:alpha}, reclustering is triggered when the estimated savings are at least $\alpha$ times the reclustering cost.
    The smaller the $\alpha$, the more proactive the policy.
    \cref{eq:credits} enforces a reclustering budget based on realized savings.
    Query savings are measured by comparing zonemaps before and after reclustering to track additional micro-partitions skipped at future queries.
    A reclustering operation is permitted only if its cost does not exceed the accumulated ``credit'' (i.e., net savings from past reclustering decisions plus a per-query ``allowance'' $c$).

    \smallskip
    \noindent\textbf{(2) Interplay with Performance.}
    While our primary objective is to minimize cost, users could prioritize query latency over cost aggressively.
    Integrating performance into a cost-based model is challenging because the monetary value of query speed varies by application.
    We discuss how our cost model can be extended to accommodate performance-critical scenarios.
    Assuming that all queries are equally important, we introduce a parameter $\beta$ to denote the price a user is willing to pay for every additional pruned micro-partition.
    Then, for each query $Q \in W$, we augment its estimated savings with a performance bonus:
    \begin{equation}
        \Delta \hat{cost}_{q}(\mathcal{P} \mid Q) \gets \Delta \hat{cost}_{q}(\mathcal{P} \mid Q) + \beta \cdot \vert\Delta \hat{pruned}(\mathcal{P} \mid Q)\vert
    \end{equation}
    where $\vert\Delta \hat{pruned}(\mathcal{P} \mid Q)\vert$ denotes the estimated increase in skipped micro-partitions for $Q$ after reclustering $\mathcal{P}$.
    By tuning $\beta$, users can express their specific exchange rates between cost budget and query performance.
}

\subsection{Hybrid Layout}
\label{sec:hybrid}

\begin{figure}[t!]
    \centering
    \includegraphics[width=\linewidth]{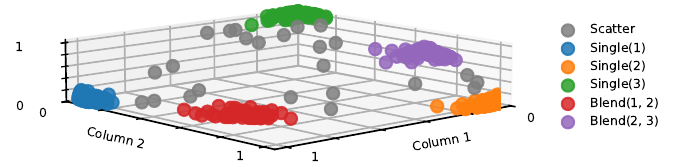}
    \vspace{-1.5em}
    \caption{Hybrid Layout for Clusters. \textmd{After clustering, each partition group is assigned a physical layout that best suits its savings signature, producing a hybrid table layout.}}
    \Description{Hybrid Layout for Clusters}
    \label{fig:hybrid-layout}
\end{figure}

Unlike prior data-driven methods that must commit to a pre-selected clustering key, our approach decouples the reclustering policy from the clustering key selection by providing the flexibility for each micro-partition to choose its most beneficial clustering key.

To understand how each micro-partition might benefit from reclustering, we decompose its aggregate estimated savings into per-column components.
We assume that each predicated column contributes equally in a query.
Specifically, for each selected micro-partition $P$, we distribute its aggregated savings across all columns that appear in a query's predicate within the sliding window:
$$ s_P=\bigl(s_P^{(1)}, s_P^{(2)}, \dots , s_P^{(d)}\bigr), $$
where $d$ is the number of attributes, $s_P^{(i)}$ is the share of $P$'s projected benefit attributed to column $i$.
Each savings vector is then plotted in a high-dimensional space to create the micro-partition's ``\textbf{savings signature}'', as illustrated in \cref{fig:hybrid-layout}.

\marginnote{\ul{R4.O1}}[-1.2em]
\review{d}{
    Building on the theoretical analysis in the previous section, a straightforward approach is to derive a single global clustering key corresponding to the base column with the highest aggregated savings.
    While this strategy inherits the theoretical cost guarantees, practical workloads often exhibit heterogeneous access patterns where no single key is optimal for all partitions.
    To address this, we leverage the decoupling of our framework to propose a \textbf{hybrid layout} strategy.
    This approach allows different micro-partitions to adopt different clustering keys, trading the strict theoretical simplicity for greater flexibility and empirical performance gains.}

We first normalize each vector to unit length and then cluster the normalized points with DBSCAN \cite{DBLP:conf/kdd/EsterKSX96} using cosine distance.
To interpret the resulting clusters, we define two families of anchor points: 1) the unit basis vectors $e_i$ (single-column dominance), and 2) uniform multi-column anchors $v_S$ with equal weights (multi-column blend). After clustering, each cluster is labeled with the anchor that is closest to its centroid.

\begin{itemize}
    \item \textbf{Single-column dominance.} If a micro-partition's projected savings are contributed almost entirely by a single column, reclustering using the column as the clustering key is sufficient. Micro-partitions within the same cluster are grouped and sorted together by their shared dominant column.
    \item \textbf{Multi-column blend.} When savings concentrate on a small set of columns, we recluster using a Hilbert-curve ordering \cite{DBLP:journals/sigmod/LawderK01} over those columns. The columns are arranged in descending order of their overall contribution.
    \item \textbf{High-dimensional scatter.} Micro-partitions with diffuse savings across many columns do not benefit from a lexicographic key. Instead, we organize these micro-partitions with Qd-tree \cite{DBLP:conf/sigmod/YangCWGLMLKA20} to adapt to arbitrary query patterns.
\end{itemize}

The candidate micro-partitions are therefore divided into mutually exclusive groups, each assigned the physical layout that promises the greatest benefit.
This strategy produces a hybrid table organization: different regions of the data are clustered with different keys, yet every reclustering decision is still guided by the same workload-aware cost model described earlier.
Micro-partitions thus shift between groups as the workload evolves, allowing the table layout to adapt over time.

\section{RECLUSTERING AS A SERVICE}
\label{sec:ars}

\begin{figure}[t!]
    \centering
    \includegraphics[width=\linewidth]{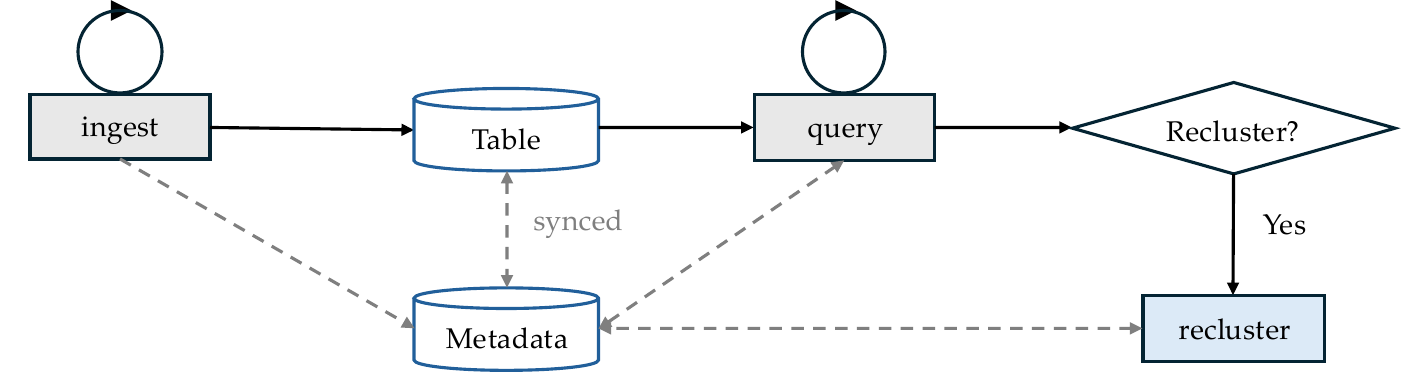}
    \caption{Service Framework Workflow. \textmd{A metadata service continuously collects partition metadata, workload execution statistics, and reclustering snapshots. After a query completes, a reclustering operation may be triggered, which uses the metadata to determine its scope and is then executed on a dynamically provisioned compute node.}}
    \Description{Service Framework Workflow}
    \label{fig:workflow}
\end{figure}

Having presented the \algname algorithm, we now introduce the \sysnamefull (\sysname), a scalable reclustering framework that autonomously manages background operations to preserve table clustering quality over time.
\sysname leverages execution statistics with minimal compute overhead to decide the scope and timing of reclustering.
\sysname moves all reclustering work off the query-execution critical path, ensuring zero disruption to production workloads in practice.
In this section, we describe how to embed the \algname algorithm into \sysname with minimal engineering effort and \marginnote{\ul{R3.O1}}[-1.2em] \review{c}{full compatibility on top of cloud data warehouses.}
We present the practical feasibility of \sysname and the implementation details to be evaluated, and \marginnote{\ul{R1.O1}}[-1.2em] \review{a}{provide an overview of pruning opportunities across representative query patterns.}

\subsection{Service Framework}
\label{sec:distributed}

As shown in \cref{fig:workflow}, integrating the reclustering framework into an existing cloud data warehouse additionally requires: 1) an (often already in place) metadata service, and 2) a dedicated reclustering job executor.

\textbf{Metadata Service.}
Three main categories of metadata are collected: 1) \emph{Partition-level table statistics}, including \reversemarginpar\marginnote{\ul{R1.M2}}[-1.2em] \review{a}{min-max values, sizes, and compression ratios for each column chunk collected from partition metadata}; 2) \emph{Partition utilization statistics}, capturing detailed partition-level information for each query, such as partition pruning outcomes, data access volume, and measurements of filtered rows along with their sizes; 3) \emph{Query-level statistics}, including total I/O size, I/O time, network transfer speed, and CPU time.

Metadata is continuously tracked and preserved during data ingestion and query execution asynchronously for further utilization.
\marginnote{\ul{R3.O1}}[-1.2em]
\review{c}{In modern distributed database systems, the metadata service is typically a standalone component.
    It should integrate seamlessly with a cloud data warehouse's existing metadata infrastructure as a drop-in module.}

\textbf{Reclustering Executor.}
After specifying the partition set to be reclustered and corresponding layouts, the reclustering executor retrieves these partitions from storage layer, loads their data into local memory, and reorganizes them into new partitions.
These clustered partitions are then persisted back to the object storage.

\normalmarginpar\marginnote{\ul{R3.O1}}[-1.2em]
\review{c}{Reclustering operations are designed to be non-blocking in distributed systems.}
When a reclustering operation completes, \sysname publishes a new table snapshot that references the current set of partitions.
Because reclustering reorganizes data without changing any row values, queries that began before completion continue to read the prior snapshot along with any newly ingested partitions.
New queries started after completion read the reclustered partitions referenced by the latest, updated snapshot.
Once no queries reference an older snapshot, \sysname safely removes obsolete partitions during garbage collection.
The table snapshot chain incurs minimal storage overhead and allows new queries to benefit immediately once reclustering completes.
Reclustering tasks run on dynamically provisioned compute nodes, entirely outside the query execution critical path.
\marginnote{\ul{R3.O1}}[-1.2em]
\review{c}{In cloud data warehouses with a cache layer, reclustering naturally benefits from scan sharing as it requires scanning complete micro-partition data.}
Moreover, systems can opt for the k-th quantiles method \cite{cormen2022introduction} over standard, well-optimized sorting to recluster partitions in $O(n\log k)$ time.

\subsection{Implementation Details}
\label{sec:impl}

We implement the service using DuckDB \cite{DBLP:conf/sigmod/RaasveldtM19,DBLP:conf/vldb/Raasveldt22} as the execution engine and deploy in a cloud environment backed by AWS S3 and Redis \cite{redis} (metadata). Data is stored in Parquet \cite{parquet}. During ingestion, we use Arrow \cite{arrow} to write data into fixed-size partitions and extract partition-level statistics from file headers into Redis.

We instrument DuckDB to extract predicates from the ongoing query
and evaluate against the active table snapshot in Redis.
We modify the execution engine to collect relevant statistics when executing queries directly on S3.
Data transfer size of each partition is recorded by patching the \texttt{HTTPFS} extension.
Data chunk before and after filtering are captured at the \texttt{TableScan} executor.

We consider three major optimizations in DuckDB: 1) projection pushdown, 2) column chunks, and 3) encoding and compression.
For unpruned partitions, we record both the total column chunk size retrieved and the actual size read by a query in each column after projection pushdown.
A partition's utilization is then adjusted according to each column chunk's compression ratio.
These cost metrics are evaluated asynchronously to determine a reclustering decision.
After a reclustering job finishes,
the metadata of these new partitions is recorded,
and a new table snapshot is published to keep production workloads uninterrupted.

\review{a}{\subsection{Pruning Opportunities}}
\label{sec:pruning}

\normalmarginpar\marginnote{\ul{R1.O1}}[-1.2em]
\review{a}{
    Zonemap-based pruning extends beyond simple filters to support complex operations.
    We summarize the pruning opportunities for representative query types, as documented by Snowflake \cite{DBLP:conf/sigmod/ZimmererDKWOK25}.
    For \textbf{LIMIT} clauses, a well-clustered table facilitates early termination once enough \textit{fully-matching} partitions are found.
    For \textbf{JOIN} operations, build-side values are summarized into zonemaps for the probe side to prune partitions, which are essentially range queries that benefit from our reclustering approach.
    Snowflake builds a \textit{pruning tree} to handle complex expressions across \textbf{diverse predicates}.
    In our context, each node in this tree identifies a subset of boundary partitions associated with the base columns.
    Our approach improves the efficiency of this pruning tree, and the resulting evaluation process guides our reclustering decisions.
}

\section{EVALUATION}
\label{sec:evaluation}

\marginnote{\ul{Meta}\\\ul{R1.O1}\\\ul{R1.O4}\\\ul{R4.O1}}
\review{m}{We evaluate our designs on \emph{TPC-H} \cite{tpch}, \emph{DSB} \cite{DBLP:journals/pvldb/DingCGN21}, and real-world workload \emph{Mirrors}.}
Our principal findings are:

\begin{enumerate}
    \item \textbf{Incremental Reclustering} is more efficient and practical than full table repartitioning, achieving effective partition pruning at lower reclustering cost.
    \item \textbf{\algnamefull} consistently outperforms baselines, delivering significant reductions in I/O volumes and overall costs.
    \item \textbf{\algname} remains robust \review{m}{across diverse data distributions, continuous ingestion, and evolving workload patterns.}
\end{enumerate}

\review{a}{
    \subsection{Workload Generator}
    \label{sec:workload-generator}

    \marginnote{\color{black}\ul{R1.O1}}[-1.2em]
    Standard benchmarks (e.g., TPC-H) largely rely on static datasets with fixed query templates and limited dynamic features.
    We therefore develop a workload generator to extend standard benchmarks with a set of configurable parameters to capture the interplay between continuous data ingestion and evolving workload patterns.
    The generator divides a workload into consecutive \textbf{periods}.
    Each period is configured to represent a specific workload pattern and consists of a sequence of workload \textbf{batches} (e.g., 12 batches per period).
    Each batch includes a data ingestion phase, a query execution phase, and an optional reclustering phase\footnote{Reclustering is invoked after each batch but may decide to take no action.}.
}

\review{a}{
    \textbf{Data Ingestion.} For static benchmarks such as TPC-H, we slice the dataset into chronological batches (e.g., grouped monthly by order date), preserving the original data ordering.
    For DSB and Mirrors (our real-world dataset), because their datasets have built-in timestamps, we use ``refresh runs'' in DSB and ``replay logs'' in Mirrors to generate the ingestion batches.
    An initial, configurable number of ingestion batches forms the initial state of the database.

    \textbf{Query Mix.} For each batch, the workload generator constructs a \textbf{query mix} derived from the benchmark's query templates.
    A query mix can cover the complete set of query templates or target specific subsets (e.g., join-heavy queries).
    We introduce fine-grained control over queries' \textbf{selectivity} and data access patterns.
    For example, the query predicates can follow a specific \textbf{distribution} (e.g., uniform, skewed towards recent).
    Within each period, we use the \textbf{shifting rate} to control the percentage of queries regenerated between consecutive batches.
    Between periods, the main \textbf{predicated column} may switch to represent a major workload shift (e.g., from ship date to commit date in TPC-H).
}

\subsection{Workloads}
\label{sec:workload}

\marginnote{\ul{R1.O1}}[-1.25em]
\review{a}{We describe how the above workload generator applies to TPC-H, DSB, and Mirrors.}

\subsubsection{TPC-H}

We generate a TPC-H dataset with a scale factor of 720.
We divide the dataset into 72 monthly batches grouped by (\texttt{o\_orderdate})\footnote{The maximum gap between \texttt{l\_shipdate} and \texttt{o\_orderdate} is extended from TPC-H default of 121 days to 1,000 days accordingly to create a larger mismatch between ingestion order and query predicates.}.
Reclustering targets the largest fact table \texttt{lineitem}.
Each batch comprises all 8 query templates that filter on \texttt{l\_shipdate}.
The default predicate selectivity is set to a two-month interval.
Each batch runs: 1) a \texttt{global} workload of the 8 queries with uniformly sampled predicates across the entire date range, and 2) a \texttt{local} workload of the 8 queries following a Zipf ($\alpha=2$) distribution skewed toward recent months.

The workload includes a total of 6 periods ($\mathbf{P1}$ to $\mathbf{P6}$), each consisting of 12 consecutive batches.
The first two periods $\mathbf{P1, P2}$ form the initial data pool.
The workload maintains a shifting rate of 25\% in $\mathbf{P2}$ and $\mathbf{P3}$ ($\mathbf{P2}$ for warm up).
Reclustering begins from $\mathbf{P3}$ and continues thereafter.
Workload shifting rate rises to 75\% from $\mathbf{P4}$.
To mimic a major workload shift, the primary predicate column changes to \texttt{l\_commitdate} in $\mathbf{P5}$, and to a 2:1 mix of \texttt{l\_shipdate} and \texttt{l\_commitdate} in $\mathbf{P6}$.

\marginnote{\ul{Meta}\\\ul{R1.O1}\\\ul{R4.O1}}[0.65em]
\review{m}{
    \subsubsection{DSB}
    \label{sec:dsb}

    Compared to TPC-H, DSB has more complex schema with multiple fact tables.
    DSB's query templates target (i) complex joins (averaging a degree of 10.8), (ii) more filters on diverse table columns, and (iii) LIMIT queries \cite{DBLP:journals/pvldb/DingCGN21}.

    We follow settings similar to the TPC-H setup above.
    The workload consists of 6 periods, each containing 12 batches.
    An initial dataset (SF=240) and these 72 ``refresh runs'' ingestions bring the final scale factor to around 720.
    The first two periods ($\mathbf{P1, P2}$) expand the initial data pool.
    Reclustering begins from $\mathbf{P3}$ and targets the three largest fact tables: \texttt{store\_sales}, \texttt{catalog\_sales}, and \texttt{web\_sales}.
    Each batch includes all 30 DSB queries that filter on these tables.
    Predicate values are sampled from DSB's default Gaussian ($\sigma=2$) distribution, with dates centered on a reference month that advances by one month per batch.
    We set the workload shifting rate to 100\% and fix query selectivity at a two-month interval.
}

\subsubsection{Mirrors}

Our real dataset, called \textbf{Mirrors}, comprises access logs collected from
a major open-source software mirror \cite{tuna}.
Spanning the last 120 days, it contains 2.4TB of raw web server logs, which compress to a total of 212GB.
Each record includes source address, user agent, access path, response status code, response size, and multiple access related information fields\footnote{An obfuscated example: 117.176.220.183 - - [29/Sep/2024:10:00:31 +0800] "GET \url{/p
    ypi/web/packages/f6/ab/c7d5e79d2984001911d864af8ec74492da5dba558737b10774ce27587
    239/duckdb-1.1.0-cp310-cp310-manylinux_2_17_x86_64.manylinux2014_x86_64.whl} HTTP/1.1" 200 20097722 "application/octet-stream" "-" "poetry/1.8.3 CPython/3.10.15 Linux/5.15.153.1-microsoft-standard-WSL2" - https} \cite{tunalogs}.

Massive suspicious IP addresses are identified and blocked by the web firewall.
The dataset is paired with an internal workload derived from user tickets requesting IP unblocking.
To address these requests, we investigate the access logs across a range of associated IP addresses to determine if the flagged addresses are indeed safe.
The workload involves queries for distinct IP counts, user agent counts, access path frequencies, and the distinct count of user agents associated with distinct IP addresses.
These analyses are performed at the \texttt{/16} subnet level for IPv4 addresses and \texttt{/32} for IPv6.
Over the 120-day period, 1,210 unique IP addresses with their corresponding ranges are queried for potential unblocking, with an overall workload selectivity of 0.1\%.
Reclustering begins on day 30, using the initial 30 days of data as a natural pool.

\subsection{Experimental Setup}
\label{sec:setup}

\begin{figure}[t!]
    \centering
    \includegraphics[width=0.95\linewidth]{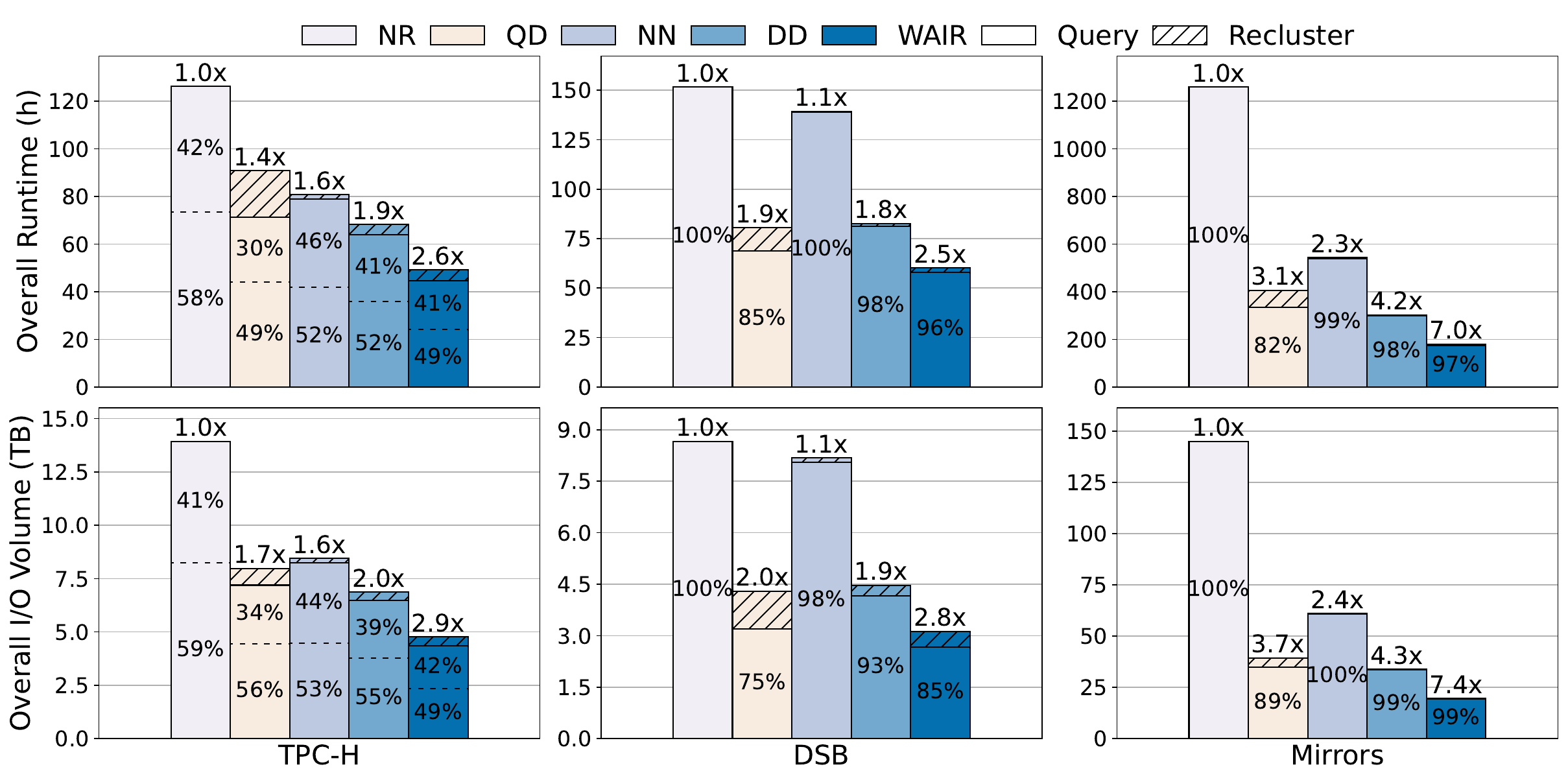}
    \caption{\review{m}{End-to-End Results.}
        \textmd{\review{m}{Each bar shows total runtime or I/O volume per method on TPC-H, DSB and Mirrors; numbers above bars indicate their overall speedup relative to NR.
                Bars are split into query (solid) and reclustering (hatched); for TPC-H, query is further divided into local- (lower) and global- (upper) components.
            }}}
    \marginnote{\ul{Meta}\\\ul{R1.O1}\\\ul{R1.O3}\\\ul{R4.O1}}[-6.3em]
    \Description{Overall Results.}
    \label{fig:overall}
\end{figure}

We compare \textbf{\algname}\footnote{\marginnote{\ul{R1.O4}}\review{a}{with the standard cost model and a hybrid layout as default.}}, our workload-aware approach against four alternate approaches described in \cref{sec:preliminaries}:

\begin{enumerate}
    \item \textbf{No Recluster (NR)}: A simple baseline that keeps the data untouched in its original ingest order without applying any reclustering.
    \item \textbf{Qd-tree (QD)}: Build a Qd-tree from historical access patterns and then fully repartition the table at the start of TPC-H
          \review{m}{and DSB periods,} and the end of months in the Mirrors workload.
    \item \textbf{New Data New Cluster (NN)}: A Delta Lake-style approach that sorts newly ingested data in each batch into stable \textit{ZCubes}, leaving existing data unchanged.
    \item \textbf{Data-Driven Incremental (DD)}: We choose Dremio to represent the data-driven incremental approach. Dremio uses Snowflake's overlapping metric to rank partitions and, in each batch, reclusters up to a fixed number of partitions whose overlapping depth exceeds a predefined overlapping threshold.
\end{enumerate}

All data are stored as 32MB Parquet files\footnote{We repeated with various partition sizes and observed consistent results.} with a single row group per file.
To ensure a fair comparison, we evaluate all approaches under the same cloud configuration and random seed.
Once a method selects partitions and the sort order for reclustering, the identical reclustering code is applied.
We omit control-plane overheads such as overlapping metrics costs and Qd-tree construction;
for full-table repartitioning, we assume local memory is sufficient to hold the entire table.
We strengthen the baselines that require a predefined sort order (NN, DD) by providing an oracle key that is most suitable for each period: for TPC-H, \texttt{l\_shipdate} in $P3$, $P4$, \texttt{l\_commitdate} in $P5$, a Hilbert-curve combination of the two in $P6$; \review{m}{for DSB, \texttt{(ss|cs|ws)\_sold\_date\_sk}}.
We tune DD via grid search and select the optimal threshold at an overall reclustering cost comparable to WAIR, ensuring a fair, budget-matched comparison.

All experiments are conducted on the AWS cloud platform \cite{aws}, using EC2 instances as compute nodes and S3 Express One Zone as the object storage.
For TPC-H \review{m}{and DSB}, we use \texttt{m8gd.8xlarge} instances, because smaller instance types provide unstable bandwidth.
For Mirrors, we use larger \texttt{m8gd.16xlarge} instances to accommodate the increased data volume.
All data transfers between EC2 and S3 are routed through S3 gateway endpoints.

\marginnote{\ul{R1.O3}}[-1.2em]
\review{a}{
    We use CPU time as a \textit{unified metric} to assess the end-to-end cost trade-offs (discussed in \cref{sec:preliminaries}).
    We then break the total runtime into query execution time and reclustering time for detailed analysis.
    We also report the average pruning rates as a stable metric to evaluate across different cloud settings.
}

\subsection{End-to-End Results}
\label{sec:end-to-end}

\begin{figure}[t!]
    \centering
    \includegraphics[width=\linewidth]{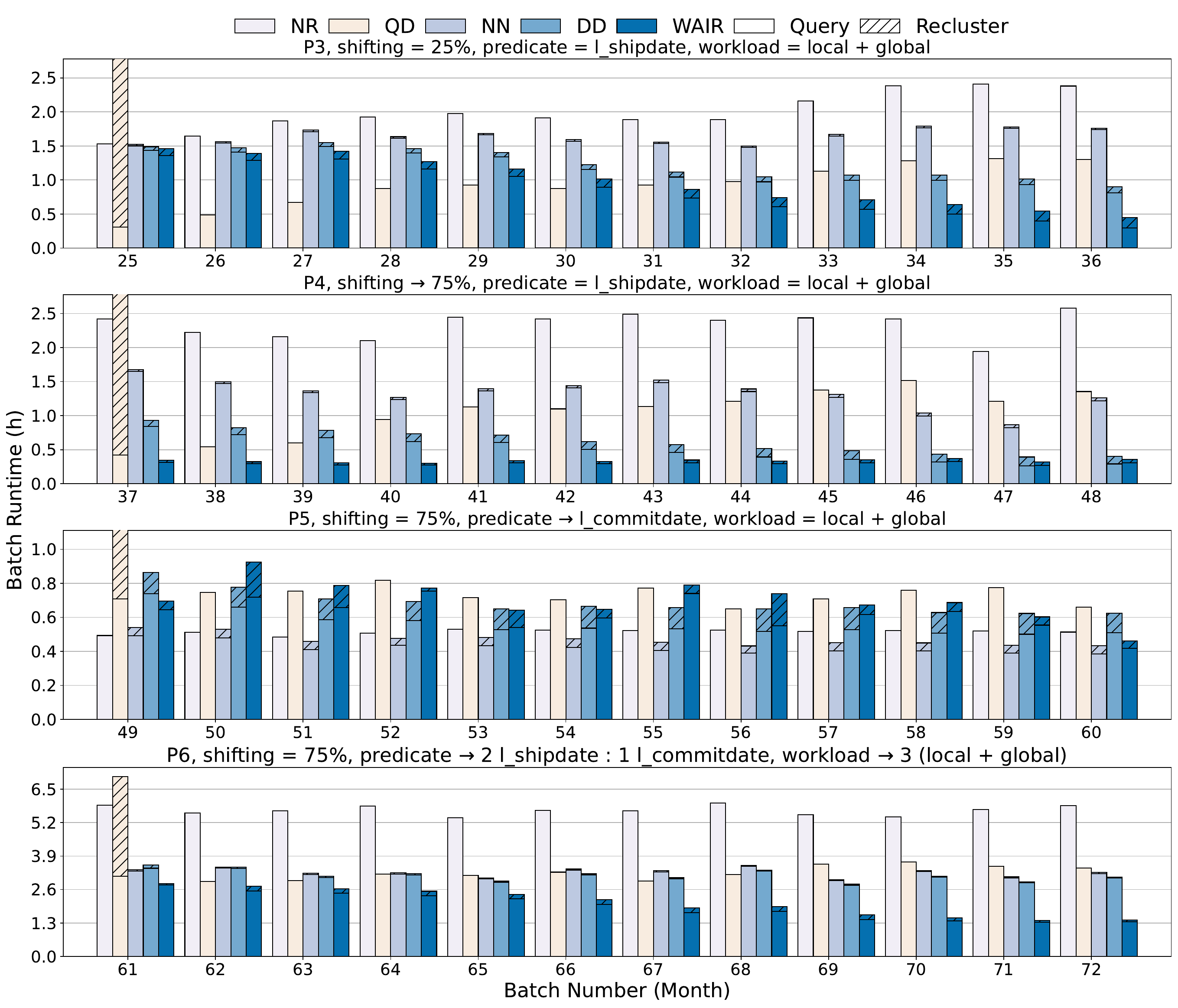}
    \caption{\review{c}{TPC-H Runtime Breakdown.}
        \textmd{Each row corresponds to a period.
            Within each row, solid bars show the monthly query cost and hatched overlays indicate the previous month's reclustering cost.
            Row titles summarize the period's primary settings.}}
    \marginnote{\ul{R3.M3}\\\ul{R3.M6}}[-3.8em]
    \Description{TPC-H Runtime Breakdown.}
    \label{fig:tpc-h}
\end{figure}

\cref{fig:overall} summarizes the end-to-end results.
Across all workloads, \algname consistently outperforms the baselines.
\reversemarginpar\marginnote{\ul{Meta}\\\ul{R1.O1}\\\ul{R1.O3}\\\ul{R4.O1}}[1.2em]
Compared to NR, \algname reduces the cumulative runtime by 61.1\% on \textbf{TPC-H},
\review{m}{60.3\% on \textbf{DSB,}} and 86.2\% on \textbf{Mirrors}.
\review{a}{The I/O volume closely matches the runtime results.}
Although QD spends a significant portion of runtime on full-table repartitioning at the beginning of each period, its query execution is still slower than \algname because QD is unable to adapt to the minor workload shifts within a period.
DD incurs almost identical reclustering overhead as \algname, but the query-time saving is much smaller due to its lack of workload awareness.
\review{m}{
    We next analyze the end-to-end results of each benchmark in detail with further breakdowns and sensitivity analysis over key parameters.
}

\subsection{TPC-H}
\label{sec:tpc-h-eval}

\subsubsection{\textbf{Per-Period Breakdown}}
\cref{fig:tpc-h} breaks down TPC-H's runtime on a per-period, per-batch basis, allowing us to trace how well each algorithm reacts to the four evaluation periods.

\textbf{Period 3 (mild drift; warm-up).}
Modest workload variation warms up the methods.
Incremental methods (NN, DD, \algname) improve steadily with low per-batch cost.
Full repartitioning (QD) pays a large one-time rewrite, starts strong, then degrades as the workload evolves.
\algname's cost model detects a highly disordered pool and invests more reclustering, yielding large immediate gains.

\textbf{Period 4 (higher variability; same predicate).}
\algname's cost model and sliding window converge, scaling back maintenance and focusing on critical partitions.
\algname sustains the best query cost with lower reclustering overhead.
QD incurs similar overhead yet its performance drops sharply, making it hard to recoup the costly repartition.
DD and NN are not workload-aware; they keep a fixed maintenance budget and deliver smaller gains than \algname.
Even when DD eventually approaches \algname's query performance, it does so with higher reclustering cost both per batch and in aggregate.

\textbf{Period 5 (predicate shift).}
The primary predicate switches to \texttt{l\_commitdate}.
Because ingestion now aligns more with the new predicate (30-90 day instead of prior 1000-day mismatch),
NR and NN benefit ``for free.''
Since QD's training did not consider the new column, its heavy rewrite yields little improvement.
Both DD and \algname recluster more in response to the shift, causing a temporary dip.
DD discards its earlier metrics and reclusters broadly, while \algname targets boundary partitions.
Guided by its cost model and sliding window, \algname adapts quickly in a sharp workload shift.

\begin{figure}[t!]
    \centering
    \includegraphics[width=\linewidth]{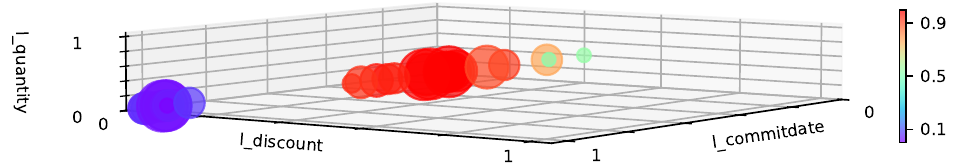}
    \caption{Centroids of Reclustered Partitions.
        \textmd{Each point denotes a centroid identified across all evaluation periods in TPC-H.
            Point size reflects the number of partitions sharing that centroid.
            Color encodes the \texttt{l\_shipdate} coordinate value.}}
    \Description{Centroids of Reclustered Partitions.}
    \label{fig:centroid}
\end{figure}

\textbf{Period 6 (mixed predicates).}
With a 2:1 mix of \texttt{l\_shipdate} and \texttt{l\_commitdate}, all methods face challenges.
QD continues to degrade.
DD and NN adopt a Hilbert-composed key but cannot adequately separate competing access patterns and serve the mixed workload, limiting gains despite reclustering cost.
\algname maintains its performance advantage by switching to a hybrid layout that quickly places centroids balancing both columns (see \cref{fig:centroid}).

\subsubsection{\textbf{Workload Characteristics}}

We then analyze the strengths and weaknesses of each method by comparing their different query costs under the recency-skewed \texttt{local} workload and the uniform \texttt{global} workload in TPC-H.
As shown in~\Cref{fig:overall}, DD's gains on \texttt{local} are smaller than its gains on \texttt{global}.
This is because DD treats the dataset uniformly without workload awareness.
It fails to prioritize recent partitions that dominate local queries.
NN achieves comparable \texttt{local} performance to DD but performs worse on \texttt{global}.
Each batch is sorted in isolation, producing many disjoint ``sorted runs'' that fragment the table and gradually degrade clustering quality.
Because NN never reclusters existing data, the initial pool remains unordered, resulting in poor performance on the global workload.
QD excels on the global workload because that workload is uniformly distributed and repeats over time, making a costly one-time full repartition beneficial.
However, a Qd-tree trained on historical statistics quickly becomes stale on \texttt{local}, which is biased toward recent data.
In contrast, \algname adapts gracefully to both workload profiles by continuously adjusting its reclustering strategy to match observed query patterns, delivering balanced and robust performance in a mixed workload.

\review{m}{
    \subsubsection{\textbf{Sensitivity Analysis}}
    \normalmarginpar\marginnote{\color{black}\ul{Meta}\\\ul{R1.O4}\\\ul{R4.O1}}

    \begin{figure}[t!]
        \centering
        \includegraphics[width=\linewidth]{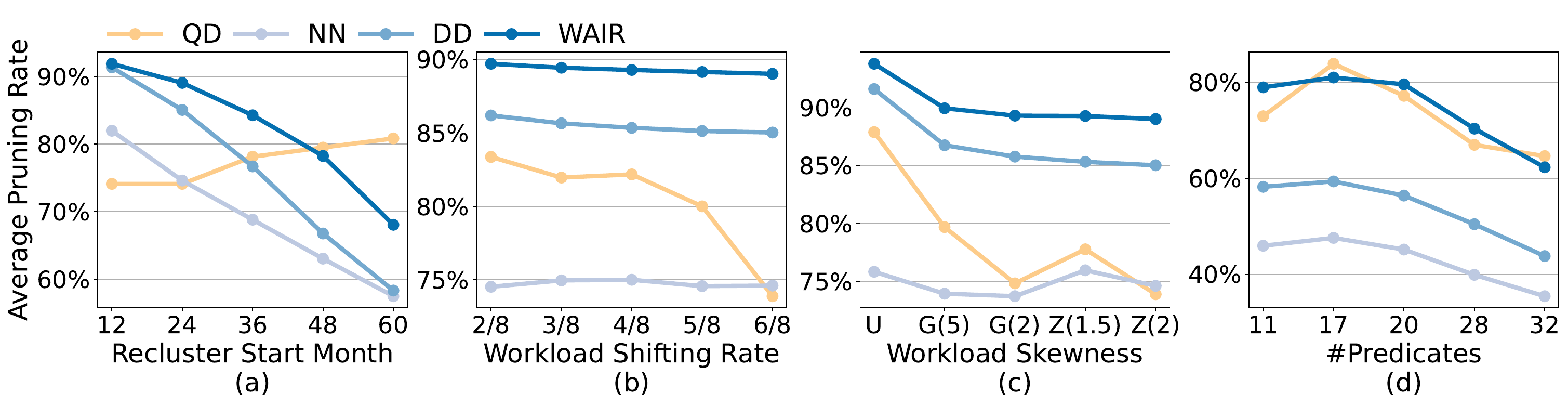}
        \vspace{-2em}
        \caption{\review{m}{Sensitivity Study.
                \textmd{Each subplot varies one key parameter while holding the others at their default values.
                    U indicates a Uniform distribution. G indicates a Gaussian distribution with $\sigma$, and Z indicates a Zipfian distribution with $\alpha$.}}}
        \marginnote{\ul{Meta}\\\ul{R1.O4}\\\ul{R4.O1}}[-3.8em]
        \Description{Sensitivity Study.}
        \label{fig:sensitivity}
    \end{figure}

    We then use our workload generator to perform controlled sensitivity analysis on the following key parameters.
    The default workload follows the same modified $\mathbf{P4}$ settings for each of the 6 periods: predicate = \texttt{l\_shipdate}, shifting rate = 75\%, workload = \texttt{local}.
    We vary one parameter at a time while holding the others at their default values.

    \textbf{Reclustering Start Time.}
    As shown in \Cref{fig:sensitivity} (a), all methods but QD perform better when reclustering starts earlier because a delayed start time leads to a larger unclustered data pool that degrades query performance.
    In contrast, because QD reclusters the entire table, it benefits slightly from a delayed start time due to more historical query data.
    \algname remains competitive even when 70\% of the data has already been ingested without clustering.

    \textbf{Workload Shifting Rate.}
    As shown in \Cref{fig:sensitivity} (b), QD's pruning rate drops dramatically as the shifting rate increases because its static partitioning quickly becomes stale.
    NN and DD are less affected because they are data-driven.
    \algname, despite its workload awareness, remains robust because its cost model tracks the shifts and adapts the reclustering accordingly.

    \textbf{Workload Skewness.}
    A skewed workload favors recently ingested data, making it harder to maintain effective clustering.
    As shown in \Cref{fig:sensitivity} (c),
    QD's pruning rate drops sharply even under mild skewness because its static partitioning cannot adapt.
    \algname exhibits slightly smaller pruning rate regression than DD under highly skewed workloads
    because \algname continuously adjusts its clustering strategy according to the workload skewness.
}

\review{m}{
    \subsection{DSB}
    \label{sec:dsb-eval}

    \marginnote{\color{black}\ul{Meta}\\\ul{R1.O1}\\\ul{R4.O1}}
    Although DSB is far more complex than TPC-H, \algname consistently outperforms the baselines as the workload evolves, as shown in \cref{fig:dsb}.
    These results validate \algname's pruning effectiveness on the complex workloads.
    Although more non-leaf operators (e.g., join, aggregate) in a query plan add compute rather than I/O complexity directly,
    they are also likely to involve more predicates and columns from different base tables (and thus more pruning opportunities).
    \algname's savings signatures and hybrid layouts can help identify the most effective base column(s) for reclustering.

    \begin{figure}[t!]
        \centering
        \includegraphics[width=\linewidth]{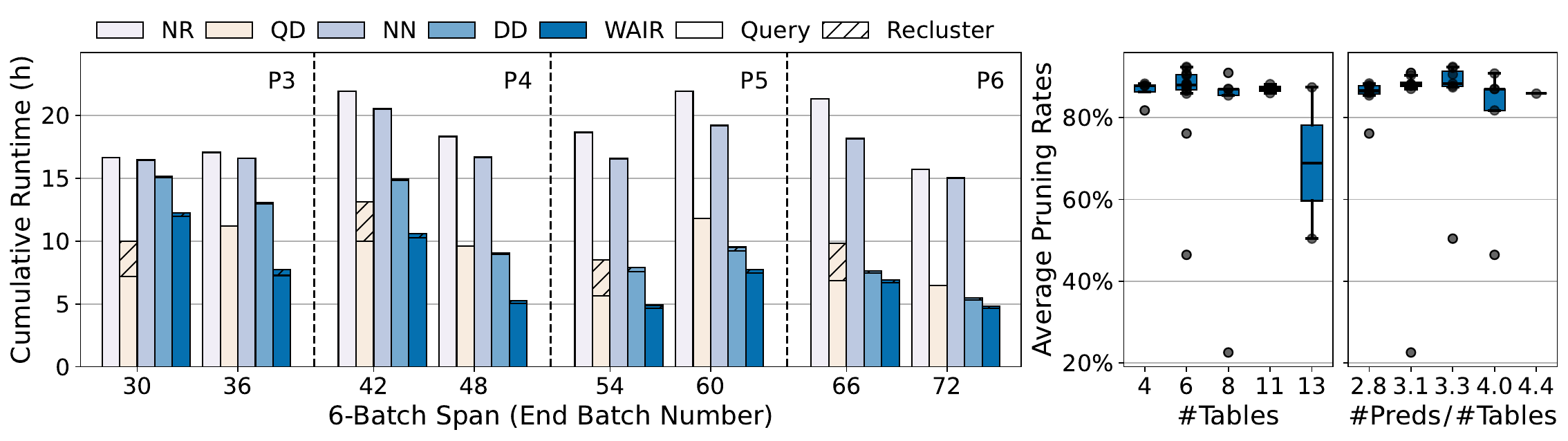}
        \vspace{-2em}
        \caption{\review{m}{DSB Breakdown.
                \textmd{Left: Cumulative 6-batch runtime.
                    Right: Box plots of query pruning rates in \algname, grouped by (1) the number of distinct base tables and (2) the number of distinct predicate attributes per table.}}}
        \marginnote{\ul{Meta}\\\ul{R1.O1}\\\ul{R4.O1}}[-3.8em]
        \Description{DSB Breakdown.}
        \label{fig:dsb}
    \end{figure}

    \Cref{fig:dsb}~(right) also reports the distribution of \algname's pruning rate per query template  grouped by
    (1) number of base tables involved, and (2) number of predicated attributes per table.
    \algname consistently achieves a high pruning rate as the query complexity increases (i.e., more joins and diverse predicates).
    The outliers (e.g., Q23) typically do not have effective range filters applied on the base tables, leaving limited pruning opportunities.
}

\marginnote{\ul{Meta}\\\ul{R1.O4}\\\ul{R4.O1}}
\review{m}{
    Leveraging our workload generator, we conduct a sensitivity analysis by varying the average number of distinct predicated columns per query.
    As shown in \cref{fig:sensitivity} (d), the pruning rates for all methods follow similar trends as more predicated columns are included in each query.
    \algname outperforms NN and DD by a large margin because \algname is able to identify the most effective base columns and apply hybrid layouts to improve pruning efficiency.
    QD also performs well in this analysis because
    there are sufficient query statistics to train a complex learned layout.
}

\subsection{Mirrors}
\label{sec:mirrors-eval}

\begin{figure}[t!]
    \centering
    \includegraphics[width=\linewidth]{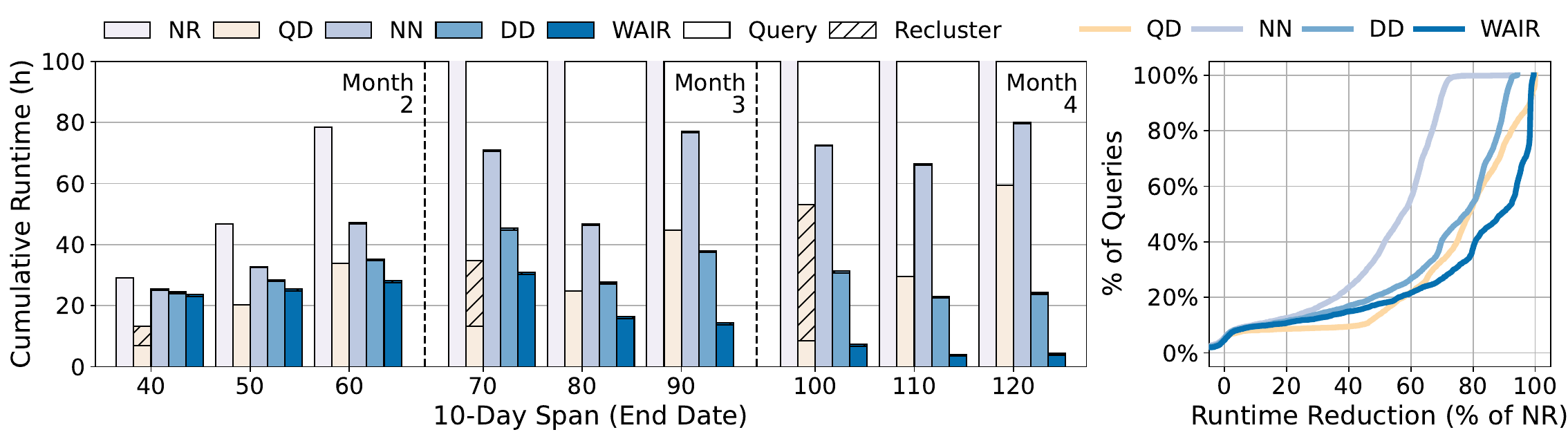}
    \vspace{-2em}
    \caption{\review{c}{Mirrors Breakdown.}
        \textmd{Left: Cumulative 10-day runtime.
            Right: ECDF of per-query runtime reduction percentages relative to NR.}}
    \marginnote{\ul{R3.M5}}[-2.5em]
    \Description{Mirrors Breakdown.}
    \label{fig:mirrors}
\end{figure}

\marginnote{\ul{Meta}\\\ul{R3.M1}}
\review{m}{
    While synthetic benchmarks provide controlled environments, the Mirrors scenario represents the chaotic reality of production cloud data warehouses.
    Derived from 2.4 TB of web access logs, this workload is characterized by \textit{extreme data skew} and \textit{highly selective access patterns} (e.g., risk checks on specific IPs or narrow time windows).
    Mirrors yields an average selectivity of only 0.1\%.
    This offers a rigorous test for reclustering decisions in ``needle-in-a-haystack'' real-world scenarios.

    As shown in \cref{fig:mirrors}, \algname's performance advantage over the baselines increases as the workload execution proceeds because \algname continuously reclusters the most critical boundary micro-partitions adaptive to the workload.
    The data-driven approach (i.e., DD), on the other hand, makes suboptimal reclustering decisions in this skewed, highly selective real workload due to the lack of workload awareness.
    QD struggles in this scenario because a static layout optimized for yesterday's traffic often fails to effectively prune today's IP queries.
    \cref{fig:mirrors} (right) shows the ECDF of each method's speedup relative to the NR baseline.
    \algname makes more than 60\% of the queries execute at least 80\% faster than the NR baseline with negligible reclustering overhead.
}

\subsection{Selectivity Analysis}
\label{sec:selectivity}

\begin{figure}[t!]
    \centering
    \includegraphics[width=0.95\linewidth]{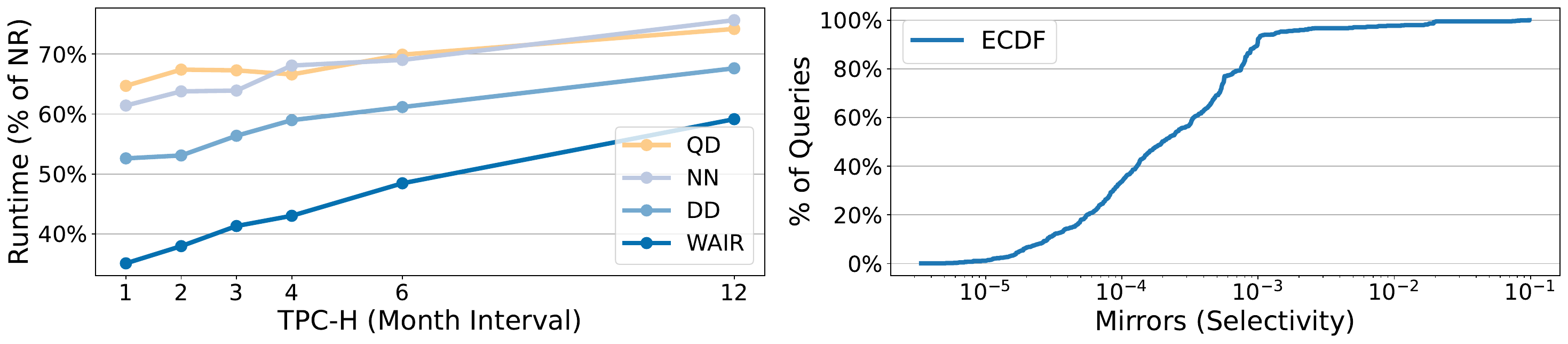}
    \caption{Selectivity Study.
        \textmd{We vary query selectivity by adjusting the TPC-H predicate window and report performance relative to NR.
            The Mirrors workload exhibits low selectivity.}}
    \Description{Selectivity Study}
    \label{fig:selectivity}
\end{figure}

\marginnote{\ul{Meta}\\\ul{R3.M1}}
\review{m}{
    We provide a sensitivity analysis on the predicate selectivity using the TPC-H workloads.
}
As shown in \cref{fig:selectivity} (left), \algname consistently outperforms all the baselines, and the performance advantages grow as the selectivity decreases.
This is because \algname's cost model adapts to selectivity to balance query and reclustering costs, while the baselines are unaware of selectivity.
\algname's dominance in the Mirrors workload is partly because of the low-selectivity nature of the queries.
\cref{fig:selectivity} (right) presents the ECDF of the selectivity of Mirrors' queries.

\subsection{Ablation Study and Optimality Gap}
\label{sec:ablation}

\begin{figure}[t!]
    \centering
    \includegraphics[width=0.95\linewidth]{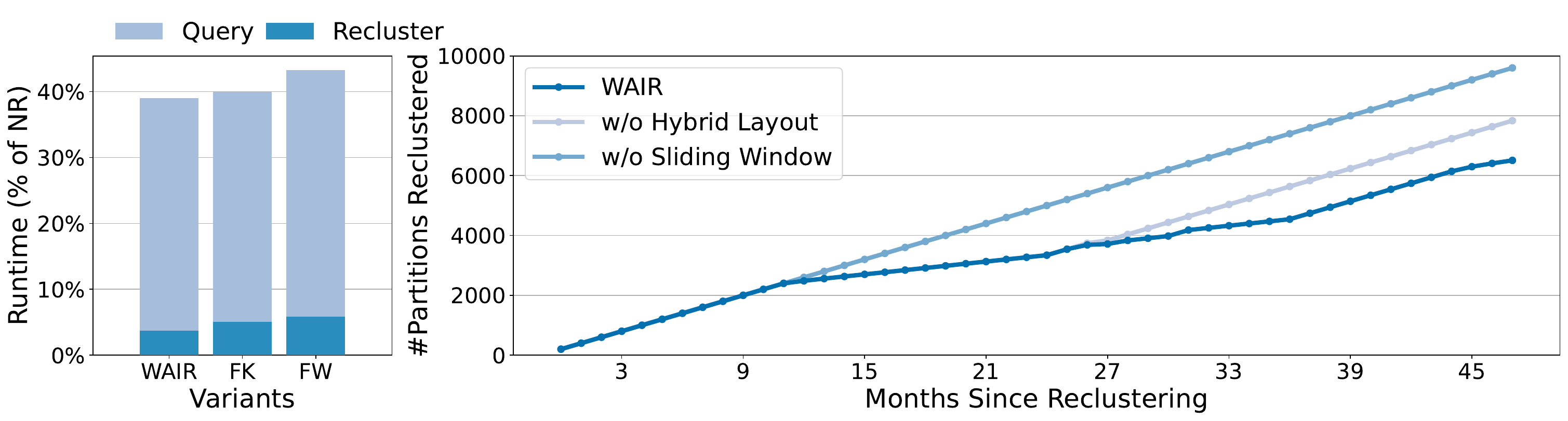}
    \caption{Ablation Study on \algname.
        \textmd{Divided runtime and reclustering intensity for \algname and its ablated \review{c}{(FK: fixed-key, FW: fixed-window)} variants on the TPC-H workload.}}
    \marginnote{\ul{R3.M4}}[-3.8em]
    \Description{Ablation Study on \algname}
    \label{fig:speed}
\end{figure}

We test \algname's resilience to continuous ingestion and workload drift by comparing it with two ablated variants.
\Cref{fig:speed} shows that \algname consistently outperforms these variants, and \algname's reclustering cost fluctuates over time as it adapts to workload shifts.
When hybrid layouts are disabled, the fixed-key policy cannot adapt to shifting predicates, resulting in degraded query performance even with higher reclustering cost.
Replacing the adaptive sliding window with a static policy reclusters the same, fixed number of partitions in each batch.
This variant reclusters far more data than necessary, incurring excessive reclustering costs with marginal query improvements.

\marginnote{\ul{R1.O4}}[-1.2em]
\review{a}{
    We evaluate \algname's ``aggressiveness'' extension (described in \cref{sec:extensions}) against its standard configuration using the TPC-H workload.
    As shown in \cref{fig:aggressive}, the aggressive variant ($\alpha=0.5$, relaxed budget $c=10$) yields greater cost reductions during early unclustered and workload-stable batches.
    However, as workload shifts intensify at $\mathbf{P5}$ and $\mathbf{P6}$, it incurs two significant cost spikes.
    The overhead from heavy reclustering is not fully amortized by subsequent savings.
    Although the aggressive variant improves overall query performance by 9.5\% with comparable total costs,
    the aggressive settings introduce non-trivial tuning complexity and excessive volatility (20\% more cost fluctuations in \cref{fig:aggressive}), which is undesirable for production environments.}

\begin{figure}[t!]
    \centering
    \includegraphics[width=0.95\linewidth]{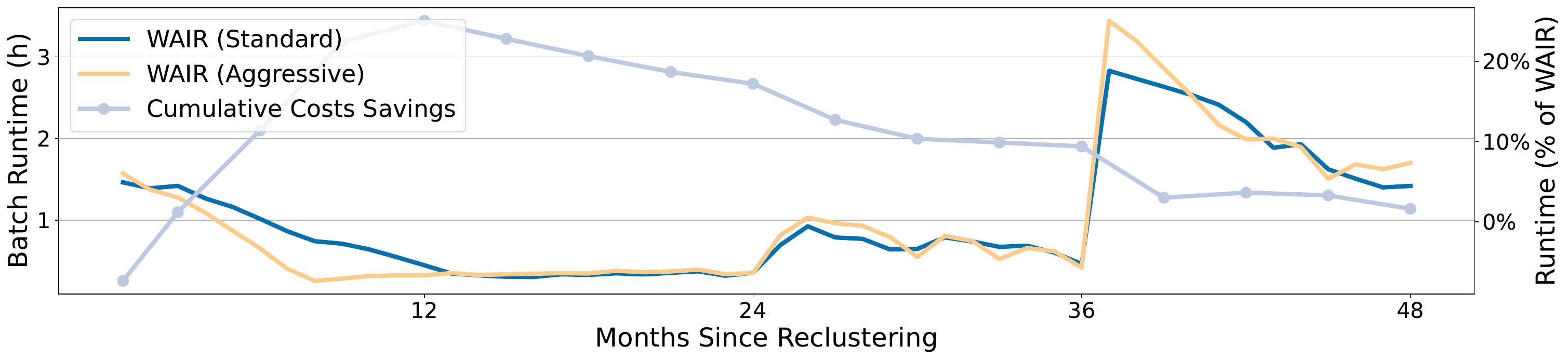}
    \caption{\review{a}{Aggressiveness Extension on \algname.
            \textmd{A more aggressive setting ($\alpha=0.5$, $c=10$) in \cref{sec:extensions} is evaluated against the standard \algname.
                The figure reports per-batch runtime and the cumulative cost reduction relative to the standard \algname.}}}
    \Description{Ablation Study on \algname}
    \marginnote{\ul{R1.O4}}[-6.25em]
    \label{fig:aggressive}
\end{figure}

\begin{figure}[t!]
    \centering
    \includegraphics[width=0.95\linewidth]{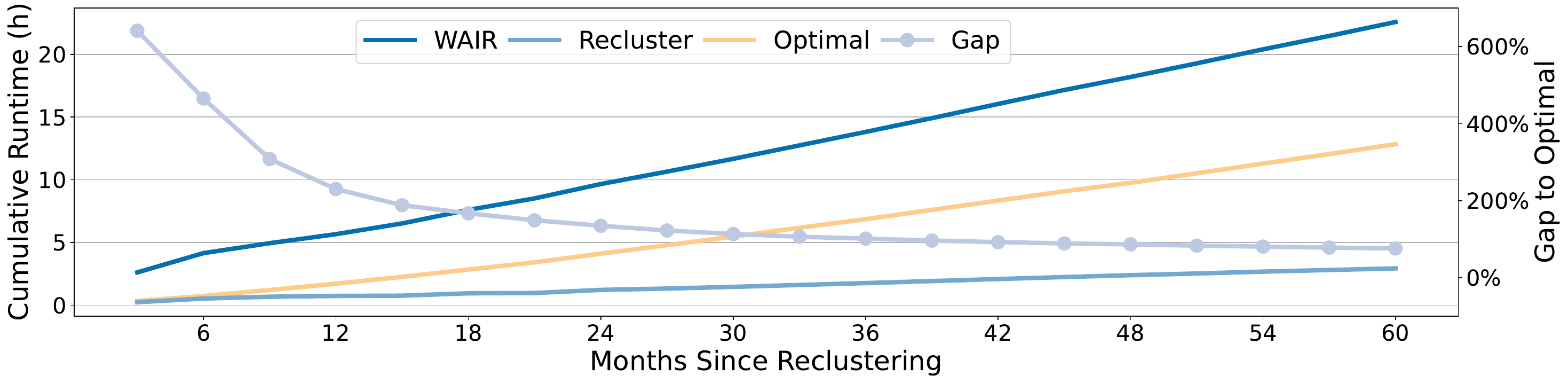}
    \caption{Optimality Gap.
        \textmd{Relative gap between \algname's cumulative cost and the oracle's query-only cost.
            Evaluated under the same TPC-H setup with a fixed predicate column (\texttt{l\_shipdate}).}}
    \Description{Optimality Gap}
    \label{fig:optimal}
\end{figure}

We further evaluate \algname against a theoretical oracle in which the table is fully sorted before query execution in \cref{fig:optimal}.
The relative gap narrows from roughly 640\% in the early batches to cumulatively 75\% by the end, at only 12.9\% overall reclustering cost.
These findings indicate that \algname achieves near-optimal query performance at a fraction of the maintenance overhead, and \algname's sliding window and hybrid layout mechanisms are essential for maintaining high performance under dynamic workloads.

\section{CONCLUSION}
This paper advocates a clean separation between \textit{reclustering policy} and \textit{clustering-key selection}, classifying prior work to highlight their inflexibility.
We formalize boundary micro-partitions and prove that reclustering them yields near-optimal pruning with bounded logarithmic amortized overhead.
Building on this, we present the \algname algorithm and implement it into a prototype automatic reclustering service.
\algname uses a sliding window of recent queries and a cloud-cost model to select high-payoff boundary partitions and reorganize them into hybrid layouts with per-group keys that maximize expected savings.
\marginnote{\ul{Meta}\\\ul{R1.O1}\\\ul{R1.O4}\\\ul{R4.O1}}
\review{m}{
    Across TPC-H, DSB, and a large real-world workload, \algname cuts total cost by up to 61\%, 60\%, and 86\%, respectively, outperforming research prototypes and documented commercial baselines.
    Breakdowns and sensitivity analyses show these gains are robust under dynamic conditions.
}
In short, \algname turns continuous, cost-effective reclustering into a practical reality for modern cloud data warehouses.

\begin{acks}
    This paper was supported by the National Natural Science Foundation of China (Grant No. 62532001), Xiongan AI Institute, and Shanghai Qi Zhi Institute.
    We would also like to thank Jiaoyi Zhang, Miao Wang, and Yihao Liu for their helpful discussions and guidance.
\end{acks}

\bibliographystyle{ACM-Reference-Format}
\bibliography{references}

\end{document}